\documentclass[sigconf]{acmart}
\usepackage{amsmath}
\usepackage{url}
\usepackage{xspace,balance,multirow}
\usepackage[font=bf]{caption}
\usepackage[ruled, vlined, linesnumbered]{algorithm2e}
\usepackage{xcolor}
\usepackage{soul}
\usepackage{colortbl}
\usepackage{bbold}
\usepackage{enumitem}
\usepackage[captionskip=-1pt]{subfig}
\usepackage{wrapfig}
\usepackage{flushend}
\usepackage{tikz}
\usepackage{pgfplots}
\usetikzlibrary{patterns}
\pgfplotsset{compat=1.16}
\usepackage{float}

\AtBeginDocument{%
  }

\copyrightyear{2026}
\acmYear{2026}
\setcopyright{cc}
\setcctype{by}
\acmConference[KDD '26]{Proceedings of the 32nd ACM SIGKDD Conference on Knowledge Discovery and Data Mining V.2}{August 09--13, 2026}{Jeju Island, Republic of Korea}
\acmBooktitle{Proceedings of the 32nd ACM SIGKDD Conference on Knowledge Discovery and Data Mining V.2 (KDD '26), August 09--13, 2026, Jeju Island, Republic of Korea}
\acmDOI{10.1145/3770855.3817920}
\acmISBN{979-8-4007-2259-2/2026/08}

\newcommand{\ie}{{i.e.,}\xspace}
\newcommand{\eg}{{e.g.,}\xspace}
\newcommand{\spara}[1]{\vspace{1mm}\noindent\textbf{#1.}}
\newcommand{\savespace}[1]{\ignorespaces}
\SetKw{KwAnd}{and}
\SetKw{KwOr}{or}
\SetKw{KwBreak}{break}
\SetKw{KwContinue}{continue}
\SetKw{KwTrue}{true}

\newcommand{\ours}{\textsf{PRA}\xspace}
\newcommand{\oursc}{\textsf{PRA$^c$}\xspace}
\newcommand{\oursa}{\textsf{APRA}\xspace}

\newcommand{\ATC}{\textsf{ATC}\xspace}
\newcommand{\ProFair}{\textsf{ProFair}\xspace}

\newcommand{\U}{\mathbf{U}\xspace}
\newcommand{\X}{\mathbf{X}\xspace}
\newcommand{\Z}{\mathbf{Z}\xspace}

\renewcommand{\b}{\mathbf{b}\xspace}

\newcommand{\x}{\mathbf{x}\xspace}
\newcommand{\y}{\mathbf{y}\xspace}

\newcommand{\z}{\mathbf{z}\xspace}
\newcommand{\I}{\mathcal{I}\xspace}

\newcommand{\W}{\mathcal{W}\xspace}

\newcommand{\C}{\mathcal{C}\xspace}

\newcommand{\R}{\mathbb{R}\xspace}
\renewcommand{\S}{\mathcal{S}\xspace}
\newcommand{\N}{\mathcal{N}\xspace}
\newcommand{\M}{\mathcal{M}\xspace}

\newcommand{\e}{{\ensuremath{\mathrm{e}}}}

\newcommand{\laks}[1]{{\textcolor{red}{}}} 

\newcommand{\LL}[1]{\textcolor{black}{#1}}

\newtheorem{definition}{Definition}
\newtheorem{theorem}{Theorem}
\newtheorem{lemma}{Lemma}
\newtheorem{claim}{Claim}
\newtheorem{proposition}{Proposition}

\newtheorem{corollary}{Corollary}[section]

\newcommand{\arxiv}[1]{{#1}}
\newcommand{\sketch}[1]{}

\newcommand{\eat}[1]{}
\newcommand{\cover}[1]{}

\newcommand{\hkk}[1]{}
\newcommand{\td}[1]{{\color{cyan}To do:}}
\newcommand{\Anson}[1]{{\color{green}Anson:}}

\def\addlegendimage{\csname pgfplots@addlegendimage\endcsname}

\pgfplotsset{every tick label/.append style={font=\tiny}}
\definecolor{myblue}{RGB}{113,210,242}
\definecolor{myorange}{RGB}{249,205,173}
\definecolor{myred}{RGB}{183,0,162}
\definecolor{mycolor}{RGB}{0,0,0}

\begin{document}

\title{Parameterized Fair Resource Allocation under Diversity Constraints}

\author{Keke Huang}
\authornote{Work partially done while at University of British Columbia as a postdoc.}
\affiliation{%
  \institution{Huazhong University of Science and Technology}
  \city{Wuhan}
  \state{Hubei}
  \country{China}
}
\email{kkhuang@hust.edu.cn}

\author{Yik Yu Ng}
\affiliation{%
  \institution{McGill University}
  \city{Montreal}
  \state{Quebec}
  \country{Canada}
}
\email{yik.ng@mail.mcgill.ca}

\author{Laks V.S. Lakshmanan}
\affiliation{%
  \institution{The University of British Columbia}
  \city{Vancouver}
  \state{BC}
  \country{Canada}
}
\email{laks@cs.ubc.ca}

\author{Xiaokui Xiao}
\affiliation{%
  \institution{National University of Singapore}
  \city{Singapore}
  \state{}
  \country{Singapore}
}
\email{xkxiao@nus.edu.sg}

\begin{abstract}

Resource allocation across multiple agent groups arises in many applications including e-commerce recommendation systems, housing assignment, and course allocation, and is commonly formulated as an optimization problem with diversity constraints to ensure group fairness. Existing approaches typically enforce these constraints as hard conditions, which overly restrict the feasible solution space and often lead to suboptimal allocations.

In this paper, we propose \ours, a parameterized framework for fair resource allocation under diversity constraints.  Inspired by the use of risk-aversion parameters in economic models, \ours introduces a set of controllable inequality-aversion parameters to softly regulate group-level diversity, thereby enabling flexible trade-offs between fairness and allocation efficiency. With appropriately calibrated parameters, \ours yields fairness-optimal assignments that comply with the specified diversity constraints. To accommodate additional application-specific constraints, we further extend the framework to an adaptive variant, \oursa. We establish that the optimality of both \ours and \oursa holds regardless of the chosen fairness metric and the nature of the additional constraints, underscoring the generality and robustness of our approach. Extensive experiments on three real-world applications demonstrate that our proposed framework consistently outperforms existing baselines in both effectiveness and robustness.
\end{abstract}

\begin{CCSXML}
<ccs2012>
   <concept>
       <concept_id>10010405.10010481.10010484</concept_id>
       <concept_desc>Applied computing~Decision analysis</concept_desc>
       <concept_significance>500</concept_significance>
       </concept>
 </ccs2012>
\end{CCSXML}

\ccsdesc[500]{Applied computing~Decision analysis}

\keywords{Resource allocation; Social Welfare; Fairness; Optimization}

\maketitle
\begin{sloppy}
\section{Introduction}\label{sec:intro}
Given a set of agents and a set of items, the problem of resource allocation is finding an allocation of items to agents in order to optimize a certain objective. Fair resource allocation or more broadly resource allocation under diversity constraints is a topic that has been extensively studied (e.g., see~\cite{steinhaus1948problem, BramsTaylor1996, Moulin17734, BrandtConitzer2016}) owing to its wide applications in government auctions~\cite{BarmanKV18}, social welfare allocation~\cite{RoosR10}, school course assignments~\cite{LouisNNS23}, and job recommendation~\cite{VladimirovaPD24}. In classic resource allocation settings, the agents often significantly outnumber the resources that need to be allocated to them. Agents are typically classified into different groups according to their attributes, e.g., gender, ethnicity, location, and items are categorized into different partitions based on their inherent properties, e.g., type, brand, etc.

\begin{figure}[t]
    \centering
    \begin{minipage}{\columnwidth}
        \centering
        \includegraphics[height=0.6\columnwidth]{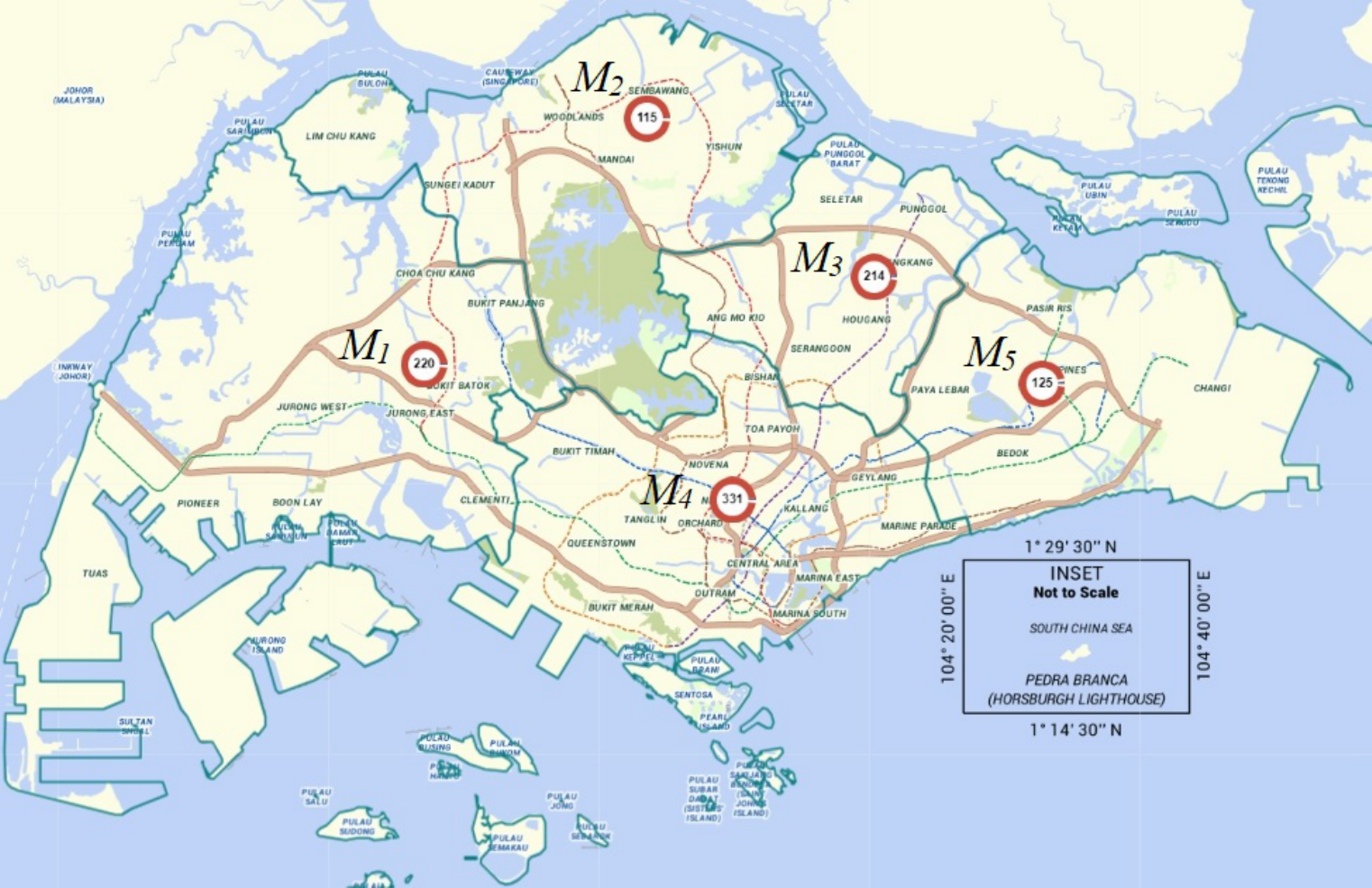}
    \end{minipage}
    \vspace{0mm}
    \begin{minipage}{\columnwidth}
        \footnotesize 
        \textbf{
        $\M_1:$ West Region (218 flats); $\M_2:$ North Region (114 flats); $\M_3:$ North-East Region (211 flats); $\M_4:$ Central Region (327 flats);  $\M_5:$ East Region (120 flats)        
        }
    \end{minipage}\vspace{-2mm}
\caption{The Distribution of HDB Flats in Singapore.}\label{fig:HDBdata} \vspace{-2mm}   
\end{figure}

One typical real-world application is the public housing allocation in Singapore~\cite{BenabbouCHSZ18,BenabbouCZ19}. Since 1989, the Singapore Housing and Development Board (HDB) has implemented an Ethnic Integration Policy (EIP)~\cite{singaporehdb1989} for  house allocation, aiming to accommodate its diverse ethnic and cultural population. As Singapore's public housing authority, HDB constructs government-subsidized public housing estates and sells them to Singapore residents. We collect up-to-date data on current HDB flats from the Singapore government's official website~\cite{hdbwebsite}
The HDB distribution and statistics are presented in Figure~\ref{fig:HDBdata}. According to the ethnic percentages of the population, the EIP establishes maximum limits on the proportion of flats in each estate, restricting occupancy to a maximum of 87\% Chinese, 25\% Malay, and 15\% Indian or other ethnic groups since March 2020~\cite{deng2013story,BenabbouCZ19}. \LL{This is an example of a \textit{diversity constraint}.} Another illustrative application is course assignments~\cite{partovi1995knowledge,LouisNNS23} in universities. Specifically, students from various departments compete for enrollment in popular public courses with limited capacity. Given the significant disparity in department sizes, ensuring group fairness is essential when selecting applicants across departments.

As agents belong to different groups, diversity constraints are typically enforced in a group-wise manner in resource allocation. In general, diversity constraints require that different groups are represented in a manner that is commensurate with their distribution in the population.  To achieve maximal fairness under such constraints, several types of approaches  have been proposed (details in Section~\ref{sec:related}). Among these approaches, two are particularly relevant. \citet{BenabbouCHSZ18,BenabbouCZ19} cast the problem as a linear programming problem and introduce \ATC, which aims to maximize the total utility under predefined capacity constraints to enforce diversity. While this formulation ensures feasibility within the specified bounds, it does not guarantee fairness-optimal solutions, as multiple feasible allocations with varying fairness levels can exist \cite{BenabbouCHSZ18}. Furthermore, \ATC treats each agent and item as a distinct optimization variable, resulting in substantial memory consumption and an increased risk of out-of-memory (OOM) issues, as evidenced by our experimental results (Section~\ref{sec:jobresults}). In contrast, \ProFair~\cite{LouisNNS23} models diversity using explicit upper and lower bounds on group allocations, thereby directly controlling the distribution proportions. However, the rigidity of these hard constraints limits the flexibility of the optimization process and often results in suboptimal fairness. 

To address the lack of flexibility of the hard constraints, we propose \ours, a \underline{P}arameterized framework for \underline{R}esource \underline{A}llocation. To this end, we investigate the {\em risk-aversion} mechanism as formulated in economic models~\cite{HeidariFGK18}, where parameterized formulations are employed to characterize risk preferences ranging from aversion to seeking in social welfare analysis. Inspired by this, we introduce {\em inequality-aversion} parameters to formalize the utility function governing allocation among groups. \LL{Inequality metrics are often hard to directly optimize. We get around this difficulty by a novel means. Suppose we have a certain desirable allocation that we want to target. We show that we can always find inequality-aversion parameters such that the allocation achieving maximum welfare w.r.t.\ those aversion parameters is the original target allocation. This is a powerful result as it allows us to reach any desirable target allocation as the welfare-maximizing allocation w.r.t.\ carefully chosen aversion parameters. The target allocation may be one that has low or zero Atkinson inequality, possibly subject to supplementary constraints besides diversity, or has indeed low inequality w.r.t.\ some other inequality metric altogether, such as {\em statistical parity}~\cite{HertweckHL21}, {\em gini index}~\cite{Farris10,charles2022gini}, or {\em nash welfare}~\cite{kaneko1979nash,CaragiannisKMPS19}. We complement these results by showing that a simple greedy algorithm \ours based on marginal gain leads to an efficient solution for finding allocations for maximum welfare, thanks to concave welfare functions.} By tweaking the inequality-aversion parameters, we prove that \ours can attain \LL{highly flexible and} varying degrees of fairness in resource allocation. Meanwhile, we employ the {\em Atkinson inequality}~\cite{atkinson1970measurement} as a \LL{main} measure of group fairness. By relaxing the allocation problem to a continuous setting with infinitesimally divisible resources, we theoretically establish that the continuous version of \ours, termed \oursc, minimizes the inequality score via appropriately calibrated parameters, leading to improved group fairness. When applied to the discrete case of indivisible items, we provide a theoretical bound on the inequality gap relative to the continuous optimum.

In addition to diversity constraints, practical applications often impose supplementary constraints. To address these, we extend \ours to an adaptive variant, \oursa. We further show that the fairness optimality of \ours and \oursa holds for a broad class of inequality metrics, and that welfare optimality is preserved even in the presence of additional application-specific box constraints. The experimental results across three real-world applications demonstrate the effectiveness and robustness of our framework.

In a nutshell, our contributions are as follows.
\begin{itemize}[topsep=1mm,partopsep=0pt,itemsep=1mm,leftmargin=18pt]
\item We propose \ours, a parameterized framework for resource allocation under diversity constraints. 
It adopts a flexible inequality-aversion parameter mechanism instead of hard constraints and is capable of offering maximum welfare while achieving minimum inequality regardless of the underlying fairness metrics (Sections~\ref{sec:optimization},~\ref{sec:generalInequality}).
\item We study the inequality bounds arising from allocations with practical indivisible items. Specifically, we provide bounds on the deviation from the divisible optimum (Section~\ref{sec:gaps}).
\item We further extend \ours to an adaptive variant, \oursa, to accommodate additional application-specific constraints. \oursa achieves welfare-optimal performance while adhering to the \LL{supplementary} constraints (Section~\ref{sec:ExtCon}).
\item We evaluate the framework across three real-world applications. The superiority of \ours and \oursa over the baselines bears testimony to   the effectiveness of our framework for resource allocations (Section~\ref{sec:exp}).
\end{itemize}

All formal proofs are provided in Appendix~\ref{sec:proofs}.\vspace{-0.5mm}

\section{Preliminaries}\label{sec:pre}

\subsection{Notations and Definitions}\label{sec:notation}

We use bold uppercase letters, bold lowercase letters, and letters in calligraphic fonts respectively to represent matrices (\eg $\X$), vectors (\eg $\x$), and sets (\eg $\S$). For a positive integer $n$,  $[n]=:\{1,2,\cdots, n\}$. 

Let $\N$ be a set of agents with $|\N|=n$ and $\M$  a set of items (\eg goods, houses) with $|\M|=m$ for $n,m \in \Z_+$. Due to the scarcity of resources, it normally holds that $n\gg m$, \ie the number of agents is significantly larger than the number of items. Agents and items are typically categorized into distinct groups \LL{or partitions} based on the inherent attributes of agents and items such as gender, ethnicity, location, and brand. In particular, $\N$ contains $K$ groups, \ie $\N=\bigcup^K_{k=1}\N_k$, and $\M$ consists of $L$ partitions of various sizes, \ie $\M=\bigcup^L_{\ell=1}\M_\ell$. Without loss of generality, we assume agent groups (resp. item partitions) are non-empty and pairwise disjoint.  
 
We use matrix $\X\in \{0,1\}^{n\times m}$ to indicate an allocation:  $\X[i,j]=1$ if agent $i\in \N$ is assigned item $j\in \M$; otherwise $\X[i,j]=0$. Frequently used notations are summarized in Table~\ref{tbl:notations}.

\begin{table}[!ht]
    \centering
    \caption{Frequently used notations}\label{tbl:notations}
    \vspace{-2mm}
    \setlength{\tabcolsep}{0.2em} 
    \renewcommand{\arraystretch}{1}
    \renewcommand{\aboverulesep}{0pt}
    \renewcommand{\belowrulesep}{0pt}
	\begin{tabular}{@{}c|m{6.5cm}@{}}\toprule 
        \textbf{Notation} & \multicolumn{1}{c}{\textbf{Description}} \\ \midrule
        $\N, \N_k$ & the set of agents and the $k$-th group\\ \midrule
        $\M, \M_\ell$ & the set of items and the $\ell$-th partition \\ \midrule        
        $n,m$ & the number of agents in $\N$ and number of items in $\M$  \\ \midrule
        $K,L$ & the number of groups in $\N$ and number of partitions in $\M$\\ \midrule
        $\X$ &  indicator matrix $\X\in \{0,1\}^{n\times m}$ for an allocation \\ \midrule
        $\U$ & matrix $\U \in \R^{K\times L}$ indicates the allocation of agent groups on item partitions \\ \midrule
        $\alpha$ & inequality-aversion parameter for agent groups $\alpha\in \R^K$ \\ \midrule
        $\W(\U,\alpha)$ & the welfare function \\  \bottomrule
    \end{tabular}
    \vspace{-1mm}
\end{table}

\subsection{Individual Utility and Social Welfare}\label{sec:utiwel}

In applications of resource allocation, {\em individual utility} and {\em social welfare}  are the two fundamental optimization objectives~\cite{dolan1998}. Utility functions are commonly used to measure an individual's satisfaction and well-being in relation to specific items. It quantifies the happiness or satisfaction an individual derives from consuming various products or services. In contrast, welfare functions are concerned with the well-being of the entire collection of agents, i.e., the society. In particular, it assesses the overall well-being or utility by considering the collective interests of all individuals within a society. Therefore, it primarily focuses on the {\em fairness} of resource distribution among individuals \LL{while having efficiency of allocation as a prerequisite}. In reality, simply maximizing the individual utility over society often results in unequal resource distribution across diverse social groups. This imbalance leads to unfairness and undermines the overall social welfare~\cite{dolan1998, JoeWongSLC12,BenabbouCHSZ18}. 

Therefore, resource allocation among multiple groups in real-world applications typically involves inherent  constraints specific to each group. \LL{Conventional research \cite{LouisNNS23} has taken the view that} the allocation for any group must neither fall below nor exceed predefined thresholds. These unique thresholds, specific to each group, are known as \emph{diversity constraints}.

 \subsection{Problem Definition}\label{sec:probdef}

We let matrix $\U \in \R^{K\times L}$ record the unit utility\footnote{We omit the variations in individual utility as we focus on group fairness.} of agents of each of the  $K$ groups over  items in the $L$ partitions. \LL{In particular, the group utility of an allocation of items in partition $\ell$ to agents in group $k$ is given by $\U[k, \ell]=\sum_{i\in \N_k}\sum_{j\in \M_\ell}\X[i,j]$ for $k\in [K]$ and $\ell\in [L]$. We follow the literature on resource allocation and assume that the number of agents in any group is significantly larger than the number of items, so it follows that $\sum_{\ell\in[L]}\U[k,\ell] < |\N_k|$ for $k\in[K]$.} Meanwhile, we assume $|\M_\ell|>K$ such that each group is assigned at least one item, \ie $\U[k,\ell]\ge 1$ for $k\in[K], \ell\in [L]$ (details in Section~\ref{sec:alpha}, Corollary~\ref{cor:nonempty}). 

Diversity constraints and the resource allocation problem under diversity constraints are formally defined as follows. 
\begin{definition}[Diversity Constraints]\label{def:divcons}
Consider an agent set $\N$ of $K$ groups and an item set $\M$ of $L$ partitions. Let $\U[k,\ell]\in \mathbb{N}$ be the number of items from the $\ell$-th partition assigned to the $k$-th group. The diversity constraint requires that \LL{for given constants} $0 < \lambda_1 \le \lambda_2$,  for all $k \in [K]$ and $\ell \in [L]$, the allocation satisfies
\[\textstyle\lambda_1 \cdot \frac{|\N_k|}{|\N|} \le \frac{\U[k,\ell]}{|\M_\ell|} \le \lambda_2 \cdot \frac{|\N_k|}{|\N|} .\]
\end{definition}

Intuitively, the constraints ensure that the utility of a group relative to the size of an item partition is ``close'' to the relative size of the group in the population. The parameters $\lambda_1, \lambda_2$ help us control just how close the two ratios need to be.  We assume that the parameters $\lambda_1,\lambda_2$ are chosen such that the resulting constraints admit feasible integer allocations.

\begin{definition}[Resource Allocation under Diversity Constraints]\label{def:probdef}
Consider an agent set $\N$ of $K$ groups and an item set $\M$ of $L$ partitions. The resource allocation problem aims to compute an allocation utility matrix $\U \in \R^{K\times L}$ that satisfies the diversity constraints.
\end{definition}

Notice that the problem aims to find a utility matrix $U$ without specifying the actual allocation $X$ corresponding to $U$. That is, the problem does not distinguish between individual agents in a group. This flexibility means that any allocation $X$ compatible with a given utility $U$ is an acceptable solution. As such, we focus on determining the utility matrix $U$ that optimizes certain desirable objectives. As well, there may exist multiple feasible allocation matrices, each potentially resulting in different levels of welfare. To promote group fairness in welfare, we resort to the concept of  {\em risk-aversion} from  economics~\cite{HeidariFGK18}, where parameterized models are used to capture a spectrum of risk preferences ranging from \LL{risk-aversion to risk-seeking}. Building on this, we introduce {\em inequality-aversion} parameters to formalize the utility function that govern inter-group allocations. In particular, we introduce a vector of parameters $\alpha \in (0,1)^K$ to compute welfare from a given utility matrix. Formally, we define the welfare function $\W: \R^{K \times L} \times \R^K \to \R$ as follows.

\begin{definition}[Welfare function]\label{def:welfare}
Given a utility matrix $\U \in \R^{K\times L}$ and an inequality-aversion vector $\alpha \in (0,1)^K$, the welfare function is defined as $\W(\U, \alpha)=\sum_{\ell\in [L]}\sum_{k\in [K]} \big(\tfrac{\U[k,\ell]}{|\M_\ell|}\big)^{\alpha_k}$. 
\end{definition}\vspace{-1mm}

The inequality-aversion parameters $\alpha$ for different groups quantitatively measure the level of welfare obtained from the utility. Given $\tfrac{\U[k,\ell]}{|\M_\ell|} \in (0,1)$, smaller inequality-aversion parameters amplify the contribution of the utility to overall welfare, thereby improving fairness. In view of this, minority groups may favor smaller values of $\alpha$, whereas majority groups usually prefer larger values.

\section{Framework with Optimal Welfare}\label{sec:framework}

\subsection{Optimal Welfare under Diversity Constraints}\label{sec:optimization}

Definition~\ref{def:welfare} defines overall welfare from group utilities by leveraging the inequality-aversion parameter $\alpha$. By calibrating parameter $\alpha$, it is possible to ensure the resource allocation among groups satisfies given diversity constraints when aiming to maximize welfare (details in Section~\ref{sec:atkinson} and~\ref{sec:alpha}). 
As such, we can attain the desired resource allocation. 

Based on this, given the parameter $\alpha\in (0,1)^K$, we propose a principled framework for {\em welfare optimization under diversity constraints} as follows. 
\begin{align}
\max\  &\W(\U,\alpha), \notag \\
\mathrm{s.t.} &\textstyle \sum_{i\in \N} \X[i,j] \le 1,\  \forall j\in \M,  \notag \\
&\textstyle \sum_{j\in \M} \X[i,j] \le 1,\ \forall i\in \N,\\
&\textstyle \X[i,j] \in \{0,1\},\ \forall i\in \N, j\in \M,  \notag
\end{align}
where $\U[k, \ell]=\sum_{i\in \N_k}\sum_{j\in \M_\ell}\X[i,j]$ for $k\in [K]$ and $\ell\in [L]$. \LL{The constraints ensure that no agent is allocated more than one item, and no item is allocated to more than one agent.} Meanwhile, the objective $\max\W(\U,\alpha)$ inherently guarantees the assignment of all items to agent groups since there are more agents than items.

As informally argued earlier, the diversity constraints are incorporated implicitly in the objective function, which are fulfilled automatically when the allocation achieves the maximum welfare. To this end, we propose a Parameterized Resource Allocation algorithm \ours (pseudo-code in Algorithm~\ref{alg:RA}) and prove that \ours can achieve optimality, thus yielding the maximum welfare. Specifically, the core idea of \ours is to assign an item from each partition to the group yielding the largest marginal gain in welfare. When multiple groups provide the same largest marginal gain, we break ties by uniformly sampling one of them. Since all such maximizers are equivalent, this is without loss of generality and can be treated as the single-maximizer case. Hence, we ignore this distinction in the remainder of the paper.

\begin{algorithm}[!t]
\begin{small}
\caption{Resource Allocation \ours}\label{alg:RA}
\KwIn{Agent set $\N$, Item set $\M$, inequality-aversion parameter $\alpha$, group number $K$, partition number $L$}
\KwOut{Utility matrix $\U$}
Initialize $\U\gets \{0\}^{K\times L}$\;
\For{$\ell \gets 1$ to $L$}
{
  \For{$t \gets 1$ to $|\M_\ell|$}
  {
     $\S^\ast \gets \arg\max_{k\in [K]}\left(\big(\tfrac{U[k,\ell]+1}{|M_\ell|}\big)^{\alpha_k}-\big(\tfrac{U[k,\ell]}{|M_\ell|}\big)^{\alpha_k}\right)$\;
     $k^\ast \sim \mathrm{Uniform}(\S^\ast)$\;
     $\U[k^\ast,\ell]\gets \U[k^\ast,\ell]+1$\;
  }
}
\Return $\U$\;
\end{small}
\end{algorithm}

\spara{Marginal gain function} Recall that vector $\U[\cdot, \ell]\in \mathbb{N}^K_+$ denotes the number of  items allocated to the $K$ groups from the $\ell$-th item partition. Accordingly, the corresponding contribution of the $k$-th group to the welfare,  $\W(\U[k,\ell],\alpha_k)$,  is calculated as $\W(\U[k,\ell],\alpha_k)=\big(\tfrac{\U[k,\ell]}{|\M_\ell|}\big)^{\alpha_k}$. Intuitively, $\W(0,\alpha_k)=0$ holds for $k\in[K]$.
For ease of exposition,  define the group-wise {\em marginal gain function} $g$ as 
\begin{equation}\label{eqn:margaingain}
\textstyle g(\U[k,\ell],\alpha_k)=\W(\U[k,\ell],\alpha_k)-\W(\U[k,\ell]-1,\alpha_k),   
\end{equation}
for $\U[k,\ell]\ge 1$ for $k\in[K], \ell\in [L]$ (see Corollary~\ref{cor:nonempty}). According to the property of the concave function, we have the following straightforward proposition. 
\begin{proposition}\label{pro:monotone}
The marginal gain function $g(\cdot,\cdot)$ is monotonically decreasing w.r.t. the number of allocated items, i.e., $g(\U[k,\ell]+1,\alpha_k) \leq g(\U[k,\ell],\alpha_k), \forall k,\ell$. 
\end{proposition}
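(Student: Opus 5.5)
The plan is to reduce the claim to a one-variable concavity statement. Fix $k$ and $\ell$, and write $m_\ell=|\M_\ell|$ and $a=\alpha_k\in(0,1)$. Define $f(x)=(x/m_\ell)^{a}$ on $[0,\infty)$. Then $g(u,\alpha_k)=f(u)-f(u-1)$ for integers $u\ge 1$. The inequality to prove is
\[ f(u+1)-f(u)\le f(u)-f(u-1),\]
which is the same as $2f(u)\ge f(u-1)+f(u+1)$. This is the midpoint concavity inequality for $f$ at the three equally spaced points $u-1,u,u+1$.

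First I show that $f$ is concave on $[0,\infty)$. On $(0,\infty)$, $f$ is twice differentiable with
\[ f''(x)=a(a-1)\,m_\ell^{-a}x^{a-2}<0, \]
because $0<a<1$. So $f$ is strictly concave on $(0,\infty)$. The function $f$ is also continuous at $0$ with $f(0)=0$, so concavity extends to the closed half-line $[0,\infty)$. This matters because the case $u=1$ uses the point $u-1=0$.

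Second, I apply the concavity inequality with weights $\tfrac12,\tfrac12$ at $u-1$ and $u+1$:
\[ f(u)=f\big(\tfrac12(u-1)+\tfrac12(u+1)\big)\ge \tfrac12 f(u-1)+\tfrac12 f(u+1). \]
Rearranging gives the claimed monotonicity of $g$, and in fact with strict inequality.

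An alternative route avoids second derivatives: apply the mean value theorem to each difference. This gives $g(u+1,\alpha_k)=f'(\xi_2)$ and $g(u,\alpha_k)=f'(\xi_1)$ with $\xi_1\in(u-1,u)$ and $\xi_2\in(u,u+1)$. Since $f'(x)=a\,m_\ell^{-a}x^{a-1}$ is decreasing for $a<1$, we get $f'(\xi_2)\le f'(\xi_1)$. Here $f$ only needs to be continuous on $[u-1,u]$ and differentiable on its interior, so the boundary case $u=1$ is still covered.

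The only delicate point is the endpoint $0$, where $f'$ blows up. Either argument above handles it, through continuity of $f$ at $0$ or through the interior form of the mean value theorem. No other step should present difficulty.
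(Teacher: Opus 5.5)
Your proof is correct and takes the same approach as the paper, which gives no separate argument and attributes the proposition directly to the concavity of $x\mapsto (x/|\mathcal{M}_\ell|)^{\alpha_k}$ for $\alpha_k\in(0,1)$. Your midpoint-concavity argument, including the continuity extension to the endpoint $0$ that the case $u=1$ needs, supplies exactly the detail the paper leaves implicit; the only caveat is that the limiting argument at $0$ gives only the non-strict inequality at $u=1$, so the strict version (used later in the proof of Lemma~\ref{lem:marginal}) should be taken from your mean-value route.
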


Utility vector $U[\cdot,\ell]$ on the $\ell$-th partition is constructed via $|\M_\ell|$ iterations in Algorithm~\ref{alg:RA} by leveraging the greedy strategy.
We establish the following lemma.

\begin{lemma}\label{lem:marginal}
Consider the utility matrix $\U$ from an arbitrary iteration for item allocation on $\M_\ell$ in Algorithm~\ref{alg:RA} for $\ell\in [L]$. It holds that $g(\U[j,\ell],\alpha_j) \ge g(\U[i,\ell]+1, \alpha_i)$ for $\U[j,\ell]\ge 1$ for $\forall i,j\in [K]$.
\end{lemma}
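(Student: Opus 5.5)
We prove Lemma~\ref{lem:marginal} by induction on the iterations of the inner loop of Algorithm~\ref{alg:RA} for the fixed partition $\M_\ell$. The invariant is exactly the statement: for every $i,j\in[K]$ with $\U[j,\ell]\ge 1$, we have $g(\U[j,\ell],\alpha_j)\ge g(\U[i,\ell]+1,\alpha_i)$. In words, the last unit that any group received is worth at least as much as the next unit any group could receive. Column $\ell$ is untouched by the iterations for other partitions, so only the inner loop on $\ell$ matters. At initialization $\U[\cdot,\ell]=0$, no index $j$ satisfies $\U[j,\ell]\ge 1$, and the claim holds vacuously.

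\emph{Inductive step.} Assume the invariant holds before an iteration, and let $k^\ast$ be the selected group. Write $\U$ for the matrix before the update and $\U'$ for the matrix after it, so that $\U'[k^\ast,\ell]=\U[k^\ast,\ell]+1$ and all other entries are unchanged. We check every pair $(i,j)$ with $\U'[j,\ell]\ge 1$.

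(a) Case $j=k^\ast$. We need $g(\U[k^\ast,\ell]+1,\alpha_{k^\ast})\ge g(\U'[i,\ell]+1,\alpha_i)$ for all $i$. For $i\neq k^\ast$ we have $\U'[i,\ell]=\U[i,\ell]$, and the inequality is precisely the greedy selection rule: by definition of $\S^\ast$, $k^\ast$ maximizes $g(\U[k,\ell]+1,\alpha_k)$ over $k\in[K]$. For $i=k^\ast$ the right-hand side is $g(\U[k^\ast,\ell]+2,\alpha_{k^\ast})$, and the inequality follows from Proposition~\ref{pro:monotone}.

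(b) Case $j\neq k^\ast$. Here $\U'[j,\ell]=\U[j,\ell]\ge 1$, so the inductive hypothesis applies to $j$. If $i\neq k^\ast$, nothing relevant changed and the hypothesis gives the claim directly. If $i=k^\ast$, the right-hand side decreased: by Proposition~\ref{pro:monotone}, $g(\U[k^\ast,\ell]+2,\alpha_{k^\ast})\le g(\U[k^\ast,\ell]+1,\alpha_{k^\ast})\le g(\U[j,\ell],\alpha_j)$, where the last inequality is the hypothesis.

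\emph{Main obstacle.} The delicate point is that $g(u,\alpha)$ is defined only for $u\ge 1$, while the greedy rule also evaluates the gain $(1/|\M_\ell|)^{\alpha_k}-0$ at $u=0$. This is why $g(\U[i,\ell]+1,\alpha_i)$ is always well defined and equals the quantity compared in the argmax. Proposition~\ref{pro:monotone} itself holds because $x\mapsto x^{\alpha}$ is concave for $\alpha\in(0,1)$, so its unit increments are nonincreasing, including the first step from $0$. I would state this explicitly so that case (a) covers groups with $\U[k,\ell]=0$.
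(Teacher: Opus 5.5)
Your proof is correct and follows essentially the same idea as the paper. The paper fixes $(i,j)$ and looks back to the iteration at which group $j$ last received an item: it invokes the greedy choice there (its Case 1, your case (a)), then uses Proposition~\ref{pro:monotone} to absorb all later increments of group $i$ at once (its Case 2), whereas your forward induction applies that monotonicity step one increment at a time in case (b). Your explicit check that the compared gains always have argument $\U[i,\ell]+1\ge 1$, and so stay in the domain of $g$, is a small tightening the paper leaves implicit.
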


We then prove that Algorithm~\ref{alg:RA} achieves the optimal welfare.

\begin{theorem}\label{thm:optimalG}
Given the input parameter $\alpha \in (0,1)^K$, let $\U$ be the utility matrix computed by Algorithm~\ref{alg:RA}. Given $|\N_k| > \sum_{\ell\in[L]}\U[k,\ell] $ for $k\in[K]$, the welfare $\W(\U, \alpha)$ of  is optimal.
\end{theorem}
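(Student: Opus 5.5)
The plan is to reduce the problem to $L$ independent one-dimensional resource-allocation problems with separable concave objectives, and then use an exchange argument driven by Lemma~\ref{lem:marginal}.

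\textbf{Step 1: decomposition.} Since $|\N_k| > \sum_{\ell}\U[k,\ell]$ for every $k$, agent capacity never binds. Any vector of group counts $(\U[k,\ell])_{k,\ell}\in\mathbb{N}^{K\times L}$ with $\sum_k \U[k,\ell]\le|\M_\ell|$ is realizable by some feasible $\X$: give each group distinct, previously unused agents. Conversely, every feasible $\X$ induces such a $\U$. So maximizing $\W(\U,\alpha)$ over feasible $\X$ is equivalent to maximizing
\[
\sum_{\ell\in[L]}\sum_{k\in[K]} \W(\U[k,\ell],\alpha_k)
\]
over integer $\U\ge 0$ with column sums at most $|\M_\ell|$. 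The objective separates over $\ell$. Each term $\W(\cdot,\alpha_k)$ is strictly increasing, so any optimum uses all $|\M_\ell|$ items of each partition. It therefore suffices to show that, for each fixed $\ell$, the column $\U[\cdot,\ell]$ built by Algorithm~\ref{alg:RA} maximizes $f(\mathbf{u})=\sum_k \W(u_k,\alpha_k)$ subject to $\sum_k u_k=|\M_\ell|$ and $u_k\in\mathbb{N}$.

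\textbf{Step 2: exchange argument for a fixed $\ell$.} Let $\mathbf{u}$ be the greedy output and $\mathbf{v}$ an optimal vector chosen to minimize $\|\mathbf{u}-\mathbf{v}\|_1$. If $\mathbf{v}\ne\mathbf{u}$, equal sums give indices $i,j$ with $v_i>u_i$ and $v_j<u_j$, so $u_j\ge1$.

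Move one unit in $\mathbf{v}$ from $i$ to $j$. The change in objective is
\[
g(v_j+1,\alpha_j)-g(v_i,\alpha_i).
\]
By Proposition~\ref{pro:monotone}, $g(v_j+1,\alpha_j)\ge g(u_j,\alpha_j)$ because $v_j+1\le u_j$. Likewise $g(v_i,\alpha_i)\le g(u_i+1,\alpha_i)$ because $v_i\ge u_i+1$. Lemma~\ref{lem:marginal}, applied to the final greedy matrix (which is the state after the last iteration on $\M_\ell$), gives $g(u_j,\alpha_j)\ge g(u_i+1,\alpha_i)$. Chaining these, the change is nonnegative.

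So the modified vector is still optimal and strictly closer to $\mathbf{u}$, contradicting minimality. Hence $\mathbf{u}$ is itself optimal. Summing over $\ell$ shows that $\W(\U,\alpha)$ is the global optimum.

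\textbf{Main obstacle: applying Lemma~\ref{lem:marginal} to the final state.} The lemma is stated for an arbitrary iteration, and I need it after the last item of $\M_\ell$ is placed. If it is read only as holding mid-run, I would re-derive it by induction on the number of items assigned in the partition. When $k^\ast$ is chosen, its new marginal $g(u_{k^\ast},\alpha_{k^\ast})$ is at least every other group's candidate gain $g(u_i+1,\alpha_i)$, by the greedy choice. The candidate gains of the other groups do not change, and $g(u_{k^\ast}+1,\alpha_{k^\ast})$ is no larger than the previous maximum, by Proposition~\ref{pro:monotone}. So the invariant is preserved, including its form at termination.

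Ties broken uniformly do not affect the argument, since every maximizer yields the same marginal gain.
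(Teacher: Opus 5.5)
Your proposal is correct and is essentially the paper's proof: you decompose over partitions and then use the same one-unit exchange, bounded via Proposition~\ref{pro:monotone} and the terminal greedy condition of Lemma~\ref{lem:marginal}. Your minimal-$\ell_1$-distance optimum only repackages the paper's repeated exchanges, and allowing $u_k=0$ removes the paper's reliance on Corollary~\ref{cor:nonempty}, which depends on the particular $\alpha$ from Algorithm~\ref{alg:alphas}. One thing to tighten in Step~1: the hypothesis bounds only the greedy $\U$, so a competitor vector may exceed $|\N_k|$ and fail to be realizable. You only need the one direction, namely that every feasible $\X$ induces a vector in your relaxed set, together with realizability of $\U$ itself.
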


Theorem~\ref{thm:optimalG} establishes the fact that when the inequality-aversion parameter $\alpha$ is specified (by Algorithm~\ref{alg:alphas} in Section~\ref{sec:alpha}), the utility allocation $\U$ output by \ours yields the optimal welfare $\W(\U, \alpha)$. Moreover, an appropriate selection of $\alpha$ ensures the allocation $\U$ adheres to the diversity constraints imposed on each group. 

\spara{Item assignment within a group} Algorithm~\ref{alg:RA} produces a group-level resource allocation distribution without committing to which agents in a group are assigned the items. By standard conventions~\cite{GrossHumbertBB23, dolan1998},  items ``assigned'' to a group can be actually assigned to agents with higher utility. This intra-group allocation process is orthogonal to the problem studied in this paper but can be effectively handled by existing methods~\cite{BenabbouCHSZ18, LouisNNS23, BenabbouCZ19}.  

\subsection{Atkinson Inequality on Fairness}\label{sec:atkinson}

As discussed in Section~\ref{sec:utiwel}, social welfare is normally measured by {\em fairness} in resource allocation across agent groups, which is often quantified using inequality measures. Among these, the {\em Atkinson inequality} metric~\cite{atkinson1970measurement} serves as a foundational tool widely used in economics. Accordingly, we adopt the Atkinson inequality as the main metric for our analysis.

\begin{definition}[Atkinson inequality~\cite{atkinson1970measurement}]
Consider a vector $\b\in \R_{\ge 0}^K$ representing the vector of utilities of $K$ groups from an allocation and a parameter $\beta \in (0,1)$. The Atkinson inequality metric $A(\b, \beta)$ is defined as
\begin{equation}\label{eqn:atkinson}
\textstyle A(\b, \beta) = 1-\tfrac{1}{\mu}\left(\tfrac{1}{K}\textstyle\sum_{i=1}^{K}\b^{(1-\beta)}_i\right)^{1/(1-\beta)},
\end{equation}
where $\mu=\tfrac{1}{K}\textstyle\sum_{i=1}^{K}\b_i$ is the averaged utility.
\end{definition}

It can be shown using Jensen’s inequality that the Atkinson inequality metric exhibits the desirable property that $A(\b, \beta)\ge 0$ and $A(\b, \beta)=0$ if and only if $\b_1=\b_2=\cdots=\b_K$.        

Essentially, the Atkinson inequality assesses the imbalance of the allocation distribution among groups. In particular, the parameter $\beta$ governs the sensitivity of the Atkinson inequality metric to the distributional imbalance represented by the vector $\mathbf{b}$. The above  \LL{desirable properties}   hold irrespective of the choice of $\beta$. Meanwhile, recall that the \LL{group level distribution computed by Algorithm~\ref{alg:RA}} intrinsically relies on the parameter $\alpha$. \LL{This raises the question, how to determine $\alpha$ to achieve a desired level of fairness, which we further explore in the subsequent section.} 

\subsection{Determination of $\alpha$ under Atkinson Inequality}\label{sec:alpha}

Throughout this section, we consider the divisible-item relaxation of \ours, termed \oursc. In particular, \oursc is interpreted as a continuous greedy allocator that repeatedly assigns an infinitesimal amount of resource to the group with the largest marginal welfare gain. We note that \oursc is solely for the purpose of analysis. In the following, we elaborate on how to determine the appropriate parameter $\alpha \in (0,1)^K$ in terms of the Atkinson Inequality. 

Without loss of generality, we focus on a specific arbitrary partition $\M_\ell$ of the item set $\M$. Let $\x \in (0,1)^K$ denote the vector representing the proportion of assignments of items from partition $\M_\ell$ to the $K$ groups. In this case, the  optimization problem on partition $\M_\ell$ can be  formalized as 
\begin{align}\label{eqn:concave}
& \textstyle\max_{\x \in (0,1)^K} \sum_{k=1}^K \x_k^{\alpha_k},  \\
& \mathrm{s.t.} \textstyle \sum_{i\in [K]} \x_k =1,  \x_k > 0,\notag
\end{align}
which is a concave optimization problem. Thus, we have the following lemma.

\begin{lemma}[\cite{BV2014}]\label{lem:kkt}
For any $\alpha \in (0,1)^K$, the objective function $\sum_{k=1}^K\x_k^{\alpha_k}$  over $\x \in (0,1)^K$ subject to $\sum_{k\in [K]} \x_k =1$ is strictly concave and admits a unique global maximum, characterized by the KKT condition $\alpha_k \x_k^{\alpha_k-1} = \lambda$ for $k\in [K]$ for some constant $\lambda$.
\end{lemma}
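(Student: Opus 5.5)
The plan is to prove the three ingredients of Lemma~\ref{lem:kkt} in turn: strict concavity of the objective on the feasible set, existence and uniqueness of a global maximizer lying in the open simplex, and the KKT characterization.

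\emph{Strict concavity.} Set $f(\x)=\sum_{k=1}^K \x_k^{\alpha_k}$. The function is separable, and each summand $h_k(t)=t^{\alpha_k}$ has
\[ h_k''(t)=\alpha_k(\alpha_k-1)t^{\alpha_k-2}<0 \quad \text{for } t>0,\ \alpha_k\in(0,1). \]
So the Hessian of $f$ is diagonal with strictly negative entries on $(0,1)^K$. Hence $f$ is strictly concave there, and therefore also strictly concave on the convex feasible set $\Delta^\circ=\{\x\in(0,1)^K:\sum_k \x_k=1\}$.

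\emph{Existence.} The open simplex is not compact, so I would pass to the closed simplex $\Delta$. Continuity of $f$ on the compact set $\Delta$ gives a maximizer $\x^\ast$. I then need $\x^\ast$ to lie in the interior, and I expect this step to be the main obstacle.

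Suppose instead that $\x^\ast_i=0$ for some $i$. Since $\sum_k \x^\ast_k=1$, some index $j$ has $\x^\ast_j>0$. Move mass $\epsilon$ from $j$ to $i$. The gain for small $\epsilon>0$ is
\[ \epsilon^{\alpha_i}-\bigl(\alpha_j(\x^\ast_j)^{\alpha_j-1}\epsilon+o(\epsilon)\bigr). \]
Because $\alpha_i<1$, the term $\epsilon^{\alpha_i}$ dominates the linear loss, so the gain is positive for small $\epsilon$. This contradicts maximality, so $\x^\ast\in\Delta^\circ$.

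The same perturbation argument shows the maximizer lies strictly inside every face of $\Delta$. In particular it cannot sit at a vertex, so $\x^\ast_k<1$ for all $k$.

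\emph{Uniqueness.} Suppose two distinct maximizers exist in $\Delta^\circ$. Their midpoint is feasible, and strict concavity makes the objective there strictly larger than at either maximizer, which is a contradiction. So the global maximum is unique.

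\emph{KKT characterization.} At the interior maximizer, only the equality constraint is active; the constraints $\x_k>0$ are not binding. Slater's condition holds trivially, since for instance $\x_k=1/K$ is strictly feasible. The Lagrangian is $f(\x)-\lambda(\sum_k\x_k-1)$, and stationarity gives
\[ \alpha_k \x_k^{\alpha_k-1}=\lambda \quad \text{for all } k\in[K]. \]
Conversely, concavity makes this first-order condition sufficient: any feasible point satisfying it, for some $\lambda$, is the global maximizer.

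Finally, I would remark that the condition determines $\x$ explicitly:
\[ \x_k=(\lambda/\alpha_k)^{1/(\alpha_k-1)}. \]
Each $\x_k$ is continuous and strictly monotone in $\lambda$, so there is exactly one $\lambda>0$ with $\sum_k\x_k(\lambda)=1$. This gives a second, constructive proof of uniqueness.
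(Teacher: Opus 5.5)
Your proof is correct and follows the standard convex-optimization argument: strict concavity from the diagonal negative-definite Hessian, Lagrange stationarity for the single linear constraint, and sufficiency by concavity. The paper gives no proof of its own and simply cites Boyd and Vandenberghe for this. You also correctly supply the step a bare citation leaves implicit, namely that a maximizer exists on the non-compact open simplex: the closed simplex is compact, and since $\alpha_i<1$ the infinite slope of $t^{\alpha_i}$ at $0$ rules out boundary maximizers.
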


Edmonds’ Greedy Theorem~\cite{schrijver2003combinatorial} points out that the greedy strategy achieves the optimal solution for the concave optimization problem in Equation~\eqref{eqn:concave}. Therefore, upon the termination of the allocation in \oursc, it returns the optimal allocation $\x$ which ensures the following marginal equality condition.
\begin{equation}\label{eqn:marginalequation}
\alpha_1 \x_1^{\alpha_1-1}=\alpha_2 \x_2^{\alpha_2-1}=\cdots=\alpha_K \x_K^{\alpha_K-1},
\end{equation}
where $\x_1+\x_2+\cdots \x_K=1$ and $\alpha_k, \x_k\in (0,1)$ for $k\in [K]$. Without loss of generality, we assume $\x_1 \ge \x_2\ge \cdots \ge \x_K$ by reindexing the pairs $(\x_k, \alpha_k)$ for $k\in [K]$. We establish the following theorem.

\begin{theorem}\label{thm:generalalpha}
Given any target allocation $\x = (\x_1,\ldots,\x_K) \in (0,1)^K$ such that $\sum_{k=1}^K \x_k = 1$, 
there exists a parameter vector $\alpha = (\alpha_1,\ldots,\alpha_K) \in (0,1)^K$ such that
$\x$ satisfies the marginal equality condition in Eq.~\eqref{eqn:marginalequation}.
\end{theorem}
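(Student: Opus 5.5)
Our plan is to reduce the statement to a one-dimensional equation for each coordinate. We fix a common target value $\lambda$ for all the marginal quantities and then solve $\alpha_k \x_k^{\alpha_k-1} = \lambda$ separately for each $k$, with $\alpha_k \in (0,1)$. For fixed $x \in (0,1)$, define $h_x(a) = a\, x^{a-1}$ on $a \in (0,1)$.

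The first step is to record the boundary behaviour of $h_x$. Since $x^{a-1}\to x^{-1}$ as $a \to 0^+$ while the factor $a \to 0$, we get $h_x(a)\to 0$. As $a \to 1^-$, we get $h_x(a)\to 1$. The function $h_x$ is continuous on $(0,1)$. By the intermediate value theorem, for every $\lambda \in (0,1)$ and every $x\in(0,1)$ there is some $a\in(0,1)$ with $h_x(a)=\lambda$.

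The second step is to choose one $\lambda\in(0,1)$, for instance $\lambda = 1/2$, and set $\alpha_k$ to be such a root of $h_{\x_k}(a)=\lambda$ for each $k$. Then $\alpha_k \x_k^{\alpha_k-1}=\lambda$ for every $k\in[K]$, which is exactly the marginal equality condition in Eq.~\eqref{eqn:marginalequation}. The constraint $\sum_k \x_k = 1$ is already satisfied by hypothesis. By Lemma~\ref{lem:kkt}, $\x$ is then the unique maximizer of the concave program in Eq.~\eqref{eqn:concave}.

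The main point to check is whether the root is unique, which is not needed for existence but is convenient if we want $\alpha$ to be well defined. Writing $\log h_x(a) = \log a + (a-1)\log x$, the derivative is $1/a + \log x$. This is positive exactly when $a < 1/\log(1/x)$. Hence $h_x$ is increasing on $(0,1)$ when $x \ge e^{-1}$, and may be unimodal (increasing then decreasing) otherwise. Uniqueness therefore need not hold in general, so we will simply select, say, the smallest root. Existence only uses continuity and the limits $0$ and $1$ at the endpoints.

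The ordering $\x_1\ge\cdots\ge\x_K$ plays no role in the argument. If desired, the monotonicity analysis above shows that in the increasing regime larger $\x_k$ receive larger $\alpha_k$, which is consistent with the remark after Definition~\ref{def:welfare}.
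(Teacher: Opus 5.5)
Your proof is correct, but it is organized differently from the paper's.

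\textbf{The paper's route.} Via Lemma~\ref{lem:fixportion}, the paper sorts $\x_1\ge\cdots\ge\x_K$ and fixes an arbitrary $\alpha_1\in(0,1)$. It sets $\lambda=\alpha_1\x_1^{\alpha_1-1}$, which may exceed $1$. It then finds each $\alpha_k$ by the intermediate value theorem on $(0,\alpha_{k-1}]$. The upper-endpoint inequality $\alpha_{k-1}\x_k^{\alpha_{k-1}-1}\ge\lambda$ comes from the ordering $\x_k\le\x_{k-1}$.

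\textbf{Your route.} You fix $\lambda\in(0,1)$ first and solve every coordinate independently on all of $(0,1)$. The upper endpoint comes from $h_x(1^-)=1>\lambda$, so no ordering is needed and the construction is symmetric and decoupled.

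\textbf{What each buys.} The paper's version lets one prescribe $\alpha_1$ directly, as Algorithm~\ref{alg:alphas} does with $\alpha_1=1/M_{\max}$. It also produces the chain $\alpha_1\ge\cdots\ge\alpha_K$ by construction; Corollary~\ref{cor:nonempty} relies on this chain together with $\alpha_1=1/M_{\max}$, i.e.\ on $\alpha_k\le 1/M_{\max}$.

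\textbf{Your uniqueness remark.} It is unnecessary, and it undersells your own construction. In the unimodal case $x<e^{-1}$, $h_x$ is strictly decreasing on $(1/\ln(1/x),1)$ toward its limit $1$. So $h_x>1>\lambda$ on that branch, and the root of $h_x(a)=\lambda$ is unique for every $\lambda\in(0,1)$; non-uniqueness only arises for $\lambda>1$. You therefore do not need to select the smallest root.

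This uniqueness gives you the paper's extra structure for free:
\begin{itemize}
\item Since $\partial_x h_x(a)=a(a-1)x^{a-2}<0$, the root is strictly increasing in $x$ for every $\lambda\in(0,1)$, not only in the increasing regime. This recovers $\alpha_1\ge\cdots\ge\alpha_K$ whenever $\x_1\ge\cdots\ge\x_K$.
\item Since $h_x(a)>a$ on $(0,1)$, every root satisfies $\alpha_k<\lambda$. Choosing $\lambda\le 1/M_{\max}$ therefore yields $\alpha_k<1/M_{\max}$ for all $k$, the bound Corollary~\ref{cor:nonempty} uses.
\end{itemize}
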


To prove Theorem~\ref{thm:generalalpha}, we first explore the pairs $(\x_1, \alpha_1)$ and $(\x_2,\alpha_2)$, as formalized in the following lemma.

\begin{lemma}\label{lem:fixportion}
Given $\x_1,\x_2 \in (0,1)$ with $\x_1\ge \x_2$ and $\x_1+\x_2\in (0,1]$, and initialized $\alpha_1\in (0,1)$, there exists $\alpha_2\in(0,\alpha_1]$ such that $\alpha_1 \x_1^{\alpha_1-1}=\alpha_2 \x_2^{\alpha_2-1}$ holds.
\end{lemma}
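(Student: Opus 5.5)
The plan is a one-variable intermediate value argument. Fix the target value
\[
c := \alpha_1 \x_1^{\alpha_1-1},
\]
which is a positive constant determined by the given $\x_1$ and $\alpha_1$. Then study the auxiliary function
\[
h(a) := a\, \x_2^{\,a-1}, \qquad a\in(0,\alpha_1].
\]
The claim is exactly that $h(a)=c$ for some $a\in(0,\alpha_1]$. Since $\x_2\in(0,1)$ is fixed, $h$ is a product of $a$ and an exponential in $a$, so it is continuous on $(0,\alpha_1]$.

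First I will examine the left endpoint. As $a\to 0^+$, the factor $\x_2^{a-1}$ tends to the finite value $\x_2^{-1}$, so $h(a)\to 0\cdot \x_2^{-1}=0$. Because $c>0$, there is some small $a_0\in(0,\alpha_1)$ with $h(a_0)<c$.

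Next I will examine the right endpoint $a=\alpha_1$. Here
\[
h(\alpha_1)=\alpha_1 \x_2^{\alpha_1-1}.
\]
The exponent $\alpha_1-1$ is negative, so $t\mapsto t^{\alpha_1-1}$ is decreasing on $(0,1)$. Together with $\x_2\le \x_1$, this gives $\x_2^{\alpha_1-1}\ge \x_1^{\alpha_1-1}$, and hence $h(\alpha_1)\ge c$. If equality holds, which happens when $\x_1=\x_2$, we simply take $\alpha_2=\alpha_1$.

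Otherwise $h(a_0)<c<h(\alpha_1)$. By continuity and the intermediate value theorem, there is some $\alpha_2\in(a_0,\alpha_1)\subset(0,\alpha_1]$ with $\alpha_2\x_2^{\alpha_2-1}=\alpha_1\x_1^{\alpha_1-1}$, which is the required identity.

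The only step needing care is the comparison at the right endpoint, specifically the direction of monotonicity of $t^{\alpha_1-1}$ under a negative exponent. This is also where the ordering $\x_1\ge\x_2$ is used. The hypothesis $\x_1+\x_2\le 1$ is not needed for this lemma beyond guaranteeing $\x_1,\x_2\in(0,1)$; it is presumably there for the inductive use in Theorem~\ref{thm:generalalpha}.

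The argument does not need $h$ to be monotone. If one wanted $\alpha_2$ to be unique, one could note that $h'(a)=\x_2^{a-1}(1+a\ln \x_2)$. This derivative changes sign at most once, at $a=-1/\ln\x_2$, and uniqueness would follow after choosing the root appropriately. Existence alone suffices here.
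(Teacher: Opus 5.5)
Your proposal is correct and follows essentially the same route as the paper. The paper defines the same function $f(\alpha_2)=\alpha_2/\x_2^{1-\alpha_2}$ on $(0,\alpha_1]$, notes that $f\to 0$ as $\alpha_2\to 0^+$ and that $f(\alpha_1)\ge \alpha_1\x_1^{\alpha_1-1}$ because $\x_2\le\x_1$, and then applies the intermediate value theorem; your choice of a point $a_0$ with $h(a_0)<c$, and your separate handling of the equality case $\x_1=\x_2$, make that application on a closed interval slightly more careful than the paper's version.
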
 

Given an initial value of $\alpha_1$ and the targeted allocation $\x$, the constant $\lambda$ is calculated as $\lambda=\alpha_1 \x_1^{\alpha_1-1}$. $\alpha_k$ in Equation~\eqref{eqn:marginalequation} is successively calculated by solving $\alpha_k \x_k^{\alpha_k-1} = \lambda$ for $k\in \{2,\cdots,K\}$. By the same continuity arguments as in the proof of Lemma~\ref{lem:fixportion}, there exists a solution $\alpha_k\in (0,\alpha_{k-1}]$.

\spara{Determining $\alpha$ in terms of the Atkinson Inequality} Theorem~\ref{thm:generalalpha} shows that for a suitable choice of $\alpha$, the desired allocation $\x$ that minimizes the Atkinson inequality can be obtained by $\oursc$. In what follows, we then discuss how to determine such $\alpha$ in terms of the Atkinson Inequality.

According to the property of the Atkinson inequality, the inequality is minimized when all elements in the vector $\b$ are equal.  In  resource allocation, vector $\b$ is the vector of average utilities of the $K$ groups, \ie $\b_k=\tfrac{\x_k}{|\N_k|}$ for $k\in [K]$. Therefore, 
\begin{equation}\label{eqn:zeroatkinson}
\tfrac{\x_1}{|\N_1|}=\cdots=\tfrac{\x_k}{|\N_K|}    
\end{equation}
leads to the minimum inequality. By  setting $t=\tfrac{\x_1}{\x_2}\in [1,\infty)$, $\alpha_1 \x_1^{\alpha_1-1}=\alpha_2 \x_2^{\alpha_2-1}$ becomes equivalent to
\begin{equation}\label{eqn:atkinsongeneral}
\textstyle \alpha_1(1+\tfrac{1}{t})^{1-\alpha_1}r^{\alpha_1-\alpha_2}=\alpha_2(1+t)^{1-\alpha_2}.   
\end{equation} 
where $r=\x_1+\x_2$. Observe that the value of $r$ depends on $\x_1 +\x_2$, which cannot be determined in advance. In addition, to satisfy the condition $\frac{\x_1}{\x_2} = \frac{|\mathcal{N}_1|}{|\mathcal{N}_2|}$ as required for minimizing  the Atkinson Inequality, we  initialize $r=\frac{|\mathcal{N}_1|}{|\mathcal{N}|} + \frac{|\mathcal{N}_2|}{|\mathcal{N}|}$ at the outset.

Based on this, we propose Algorithm~\ref{alg:alphas} to compute $\alpha$. W.l.o.g., we assume  $|\N_1| \ge |\N_2| \cdots \ge |\N_K|$, which implies $\x_1\ge \x_2\ge \cdots \ge \x_K$. As there is no closed form of $\alpha_2$, we approximate $\alpha_2$ in an iterative manner with a sufficiently small input stride $\epsilon$ in Algorithm~\ref{alg:alphas}. 
Empirically, we set $\epsilon$ on the order of $1/\max\{|\mathcal{N}_1|, \ldots, |\mathcal{N}_K|\}$. For $\alpha_1$, we set $\alpha_1=1/M_{max}$ where $M_{max}=\max\{|\M_1|, \cdots, |\M_L|\}$.

\begin{algorithm}[!t]
\begin{small}
\caption{Determination of $\alpha$}\label{alg:alphas}
\KwIn{Agent set $\N$, group number $K$, parameter $\alpha_1$, stride parameter $\epsilon$}
\KwOut{$\alpha$}
Sort $\N$ by $|\N_k|$ in a non-increasing order for $k\in[K]$\;
\For{$k \gets 2$ to $K$}
{
   $r\gets \tfrac{|\N_{k-1}|+|\N_k|}{|\N|}$, $t\gets \tfrac{|\N_{k-1}|}{|\N_k|}$\;   
   $\alpha_k \gets \alpha_{k-1}$\;
  \While{$\alpha_k \ge 2\epsilon$}
  {
     \If{$\alpha_{k-1}(1+\tfrac{1}{t})^{1-\alpha_{k-1}}r^{(\alpha_{k-1}-\alpha_k)}<\alpha_k(1+t)^{1-\alpha_k}$}
     {
       $\alpha_k \gets \alpha_k -\epsilon$
     }
     \Else 
     {\KwBreak\;}     
  }
}
\Return $\alpha$\;
\end{small}
\end{algorithm}

Algorithms~\ref{alg:RA} and ~\ref{alg:alphas} together distribute items within each partition across groups. \LL{Throughout this allocation process, group fairness is our focus.} Once the group allocation matrix $\U$ is determined by \ours, individual utilities can be subsequently incorporated within each element $\U[k,\ell]$, for all $k \in [K]$ and $\ell \in [L]$. For instance, as described in~\cite{GrossHumbertBB23}, upon determining a group allocation, each group then distributes items to its members in a manner that optimizes utilitarian social welfare~\cite{GrossHumbertBB23, dolan1998}.

According to the property of Atkinson Inequality and Algorithm~\ref{alg:alphas}, we establish the following non-empty property of the allocation by \ours as follows.
\begin{corollary}\label{cor:nonempty} Consider the allocation $\U$ from \ours with the parameter $\alpha$ derived by Algorithm~\ref{alg:alphas} with initialization $\alpha_1=1/M_{max}$ where $M_{max}=\max\{|\M_1|, \cdots, |\M_L|\}$. When $|\M_\ell|>K$ holds for $\ell\in [L]$, we have $\U[k,\ell]\ge 1$ for $k\in[K], \ell\in [L]$. 
\end{corollary}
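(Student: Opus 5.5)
The plan is to argue partition by partition: assuming some group gets nothing from a partition leads to a contradiction. Fix $\ell\in[L]$ and run the greedy loop of Algorithm~\ref{alg:RA} on $\M_\ell$. With $m_\ell=|\M_\ell|$, the first item given to group $k$ has marginal gain
\[
g(1,\alpha_k)=\bigl(\tfrac{1}{m_\ell}\bigr)^{\alpha_k}.
\]
Giving group $j$ its $(u+1)$-st item, when it already holds $u\ge 1$ items, has gain
\[
g(u+1,\alpha_j)=\bigl(\tfrac{u+1}{m_\ell}\bigr)^{\alpha_j}-\bigl(\tfrac{u}{m_\ell}\bigr)^{\alpha_j}.
\]

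Suppose for contradiction that the loop ends with $\U[k,\ell]=0$ for some $k$. Every item then went to the other $K-1$ groups. Since $m_\ell>K$, pigeonhole gives some group $j$ that received at least two items. At the step where $j$ received an item while already holding $u\ge1$, group $k$ was still empty. The greedy rule (equivalently Lemma~\ref{lem:marginal}) therefore requires
\[
g(u+1,\alpha_j)\ge \bigl(\tfrac{1}{m_\ell}\bigr)^{\alpha_k}.
\]
It suffices to show this inequality is impossible, i.e.\ that a first item always strictly beats any non-first item.

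Bounding the two sides:
\begin{itemize}
\item By concavity of $x^{\alpha_j}$ (Proposition~\ref{pro:monotone}), $g(u+1,\alpha_j)\le g(2,\alpha_j)=(2^{\alpha_j}-1)\,m_\ell^{-\alpha_j}$.
\item By Algorithm~\ref{alg:alphas}, every $\alpha_k\le\alpha_1=1/M_{max}\le 1/m_\ell$. Hence $m_\ell^{-\alpha_k}\ge m_\ell^{-1/m_\ell}$ and $m_\ell^{-\alpha_j}\le 1$.
\end{itemize}
It is therefore enough to show
\[
2^{\alpha_j}-1 < m_\ell^{-1/m_\ell}.
\]
The left side is at most $2^{1/m_\ell}-1\le \ln 2/m_\ell+O(m_\ell^{-2})$, which is small. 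The right side is $e^{-\ln m_\ell/m_\ell}\ge e^{-1/e}\approx 0.69$, since $\ln x/x\le 1/e$. For $m_\ell\ge 2$ the left side is at most $\sqrt2-1\approx0.414<0.69$, so the strict inequality holds. This contradicts the greedy choice, so $\U[k,\ell]\ge1$ for all $k$ and $\ell$.

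The main obstacle is making this final numeric comparison fully rigorous. The tie-breaking in Algorithm~\ref{alg:RA} makes strictness matter, but the gap $0.414$ versus $0.69$ leaves room. A secondary point is justifying $\alpha_k\le\alpha_1$ for every $k$. This holds because Algorithm~\ref{alg:alphas} initializes $\alpha_k\gets\alpha_{k-1}$ and only ever decreases it, giving $\alpha_K\le\cdots\le\alpha_1$; this matches Lemma~\ref{lem:fixportion}.
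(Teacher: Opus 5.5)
Your proof is correct and follows essentially the same route as the paper. You bound every non-first marginal gain by $g(2,\alpha_j)=(2^{\alpha_j}-1)m_\ell^{-\alpha_j}\le 2^{1/m_\ell}-1$, and every first-item gain from below by $m_\ell^{-1/m_\ell}$ using $\alpha_k\le\alpha_1\le 1/m_\ell$, so an empty group always strictly wins. The differences are only cosmetic: you argue by pigeonhole contradiction where the paper argues directly that each step goes to an empty group, and you close the numeric gap with $\sqrt2-1<e^{-1/e}$ where the paper (Lemma~\ref{lem:g2bound}) uses $2^{1/M}-1<\tfrac12<(1/3)^{1/3}$.
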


\spara{Nash Welfare Metric} Nash welfare (NW)~\cite{BenabbouCIZ20,Kell023,GuptaNCVRNDC23} is also a commonly used metric to measure the overall welfare of the allocation distribution. In particular, it quantifies the welfare by calculating the geometric mean of utilities among groups. The corresponding group Nash welfare on all blocks is calculated as 
\begin{equation}\label{eqn:nash}
\textstyle\textrm{NW}=\sum_{\ell \in [L]}\left(\prod_{k\in[K]} \left(\tfrac{U[k,\ell]}{|\N_k|}\right)^{|\N_k|}\right)^{1/|\N|}.   
\end{equation}
As indicated in Section~\ref{sec:utiwel}, Nash welfare also prefers fair allocation among groups~\cite{CaragiannisKMPS19}.  

\subsection{Optimality Gap Under Indivisible Items}\label{sec:gaps}

In real-world applications where items being allocated are indivisible, inherent discrepancies arise between the proportions of allocated items to groups and the ideal proportions specified in Equation~\eqref{eqn:marginalequation}. In this section, we show that the gap between the ideal allocation and that obtained by \ours is bounded. 

\begin{lemma}\label{lem:similarmargin}
Let $\U$ be the utility matrix returned from \ours and $\x=(\x_1,\x_2,\cdots,\x_K)\in(0,1)^K$ be an allocation satisfying Equations~\eqref{eqn:marginalequation} and~\eqref{eqn:zeroatkinson}. It holds that $\U[k,\ell]\in [\max\{\lfloor \x_k|\M_\ell| \rfloor -K, 0\} +1, \lceil \x_k|\M_\ell| \rceil +K-1]$ for $k\in [K]$ and $\ell\in [L]$. 
\end{lemma}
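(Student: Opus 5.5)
Fix a partition $\M_\ell$ and write $M=|\M_\ell|$, $u_k=\U[k,\ell]$. I will compare the discrete greedy output with the continuous optimum through the common marginal value $\lambda=\alpha_k \x_k^{\alpha_k-1}$ from Eq.~\eqref{eqn:marginalequation}. Let $f_k(y)=(y/M)^{\alpha_k}$; this function is strictly concave. Then the discrete marginal gain $g(u,\alpha_k)=f_k(u)-f_k(u-1)$ lies between the continuous derivatives $f_k'(u)$ and $f_k'(u-1)$. Moreover, $f_k'(\x_k M)=\lambda/M$ for every $k$.

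The key step is the exchange inequality from Lemma~\ref{lem:marginal}. At termination, for all $i,j$ with $u_j\ge 1$ we have $g(u_j,\alpha_j)\ge g(u_i+1,\alpha_i)$. Let $\tau=\min_{j} g(u_j,\alpha_j)$ be the threshold of the last accepted gains. By Corollary~\ref{cor:nonempty}, every $u_j\ge 1$, so $\tau$ is well defined. Lemma~\ref{lem:marginal} then gives $g(u_i+1,\alpha_i)\le\tau\le g(u_i,\alpha_i)$ for every $i$. So each $u_i$ is exactly the number of steps group $i$ takes before its marginal drops to $\tau$. By Proposition~\ref{pro:monotone}, this means $u_i=\max\{u: g(u,\alpha_i)\ge\tau\}$, up to ties.

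Next I compare with the continuous solution. Define $y_i(\tau)$ by $f_i'(y_i)=\tau$. Using the sandwich $f_i'(u)\le g(u,\alpha_i)\le f_i'(u-1)$, I get $y_i(\tau)\le u_i\le y_i(\tau)+1$, up to an additive constant of $1$. The counts satisfy $\sum_i u_i = M$, while the continuous targets satisfy $\sum_i \x_i M = M$ at the threshold $\lambda/M$. Monotonicity of $y_i(\cdot)$ in the threshold gives the following. If $\tau>\lambda/M$, then $u_i\le \x_i M+1$ for all $i$. Otherwise the sum $\sum_i u_i$ cannot drop below $M$ unless the discrepancy is absorbed elsewhere.

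To turn this into per-group bounds, I will argue by sum-balancing. Suppose, say, $\tau\ge\lambda/M$, so that every $u_i\le\lceil \x_iM\rceil$, up to the $+1$ slack from the sandwich. The total deficit $\sum_i(\x_iM-u_i)=0$ forces each individual deficit to be at most the sum of the others' surpluses. Each surplus is at most about $1$, and there are $K-1$ other groups, which gives $u_k\ge\lfloor \x_kM\rfloor-K+1$. Symmetrically, when $\tau\le\lambda/M$, each $u_i\ge\lfloor\x_iM\rfloor$ minus the slack. Then no group can exceed its target by more than the sum of the others' deficits, which gives $u_k\le\lceil\x_kM\rceil+K-1$. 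Combining this with the lower bound $u_k\ge1$ from Corollary~\ref{cor:nonempty} yields the $\max\{\cdot,0\}+1$ form of the lower endpoint.

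The main obstacle is the rounding bookkeeping. I must show that the sandwich slack is at most one unit per group on the correct side, and that floors and ceilings combine so that exactly $K$ (resp.\ $K-1$) appears rather than $2K$. I would handle this by working directly with integers. Any group with $u_i\ge\lceil\x_iM\rceil+1$ has last gain $g(u_i,\alpha_i)<f_i'(\x_iM)=\lambda/M$, by strict concavity. Any group with $u_i\le\lfloor\x_iM\rfloor-1$ has next gain $g(u_i+1,\alpha_i)>\lambda/M$. Both kinds cannot coexist, by Lemma~\ref{lem:marginal}, so all deviations lie on one side except for the ``within rounding'' groups. Counting with $\sum_i u_i=M$ then gives the stated interval.
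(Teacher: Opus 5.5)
Your final paragraph is the paper's proof. Both compare the next-unallocated and last-allocated discrete gains with the KKT level $\lambda/M$ via strict concavity. Both use Lemma~\ref{lem:marginal} to forbid an under-allocated group ($u_j\le\lfloor \x_j M\rfloor-1$) from coexisting with an over-allocated one. Both close with the count $\sum_k(u_k-\lfloor \x_k M\rfloor)=R\le K-1$, with $u_k\ge 1$ supplying the $\max\{\cdot,0\}+1$ endpoint; the paper just phrases this as two separate contradictions with a pigeonhole step.

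The threshold-$\tau$ detour in your middle paragraphs is unnecessary. Note also that its sandwich is really $y_i(\tau)-1<u_i<y_i(\tau)+1$, not $y_i(\tau)\le u_i$. With integrality, that strict version would in fact already give the sharp constants rather than $2K$.
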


Since $\x$ satisfies Equations~\eqref{eqn:marginalequation} and ~\eqref{eqn:zeroatkinson}, it enjoys zero Atkinson inequality. What can we say about the inequality incurred by the (group level) allocation coming from \ours, given the above gap? We next establish a bound on the inequality.

\begin{theorem}\label{thm:worstinequality}
Consider a partition $\M_\ell$ with $M=|\M_\ell|$, the allocation $\U$ returned by \ours on $\M_\ell$ across $K$ groups, the divisible optimum $\x=(\x_1,\ldots,\x_K)\in(0,1)^K$ satisfying Equations~\eqref{eqn:marginalequation} and~\eqref{eqn:zeroatkinson}, and any $\beta\in(0,1)$. Without loss of generality, assume $|\N_1|\le \cdots \le |\N_K|$. The Atkinson inequality of $\U$ on $\M_\ell$ is upper bounded by
\[\textstyle 1-\frac{1}{\mu}\left(\frac{1}{K}\left(\frac{\lfloor \x_1 M\rfloor+R}{M|\N_1|}\right)^{1-\beta}+\frac{1}{K}\sum_{k=2}^K\left(\frac{\lfloor \x_k M\rfloor}{M|\N_k|}\right)^{1-\beta}\right)^{\frac{1}{1-\beta}},\]
where $R=M-\sum_{k\in[K]} \lfloor \x_k M\rfloor$ and $\mu =\frac{1}{K}\left(\frac{\lfloor \x_1 M\rfloor+R}{M|\N_1|} +\sum_{k=2}^K\frac{\lfloor \x_k M\rfloor}{M|\N_k|}\right)$.
\end{theorem}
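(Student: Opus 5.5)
The plan is to reduce the statement to a worst-case analysis over how the \emph{rounding remainder} $R$ can be distributed, and then to use a quasi-concavity argument for $1-A(\cdot,\beta)$.

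\textbf{Step 1 (floor lower bound).} Fix the partition $\M_\ell$ with $M=|\M_\ell|$, and write $\U[k,\ell]=\lfloor \x_k M\rfloor + r_k$. The aim is to show $r_k\ge 0$ for every $k$, so that $\sum_k r_k = R$. The intended route is the greedy structure of \ours through Lemma~\ref{lem:marginal} and Proposition~\ref{pro:monotone}. Suppose some group $k$ ends with $\U[k,\ell]<\lfloor \x_k M\rfloor$. Then, since $\sum_k \U[k,\ell]=M$, some other group $j$ must end strictly above $\x_j M$. By the marginal equality in Eq.~\eqref{eqn:marginalequation}, the discrete gain of group $k$ at a level below $\x_k M$ dominates $\alpha_k\x_k^{\alpha_k-1}/M^{\alpha_k}$ up to concavity. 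Likewise, the last gain taken by $j$, at a level above $\x_j M$, is dominated by the corresponding continuous threshold. Comparing the two should contradict Lemma~\ref{lem:marginal}. Lemma~\ref{lem:similarmargin} only gives a window of width $K$, so this sharper comparison is the first place where care is needed. If the sharp version fails, I would fall back on an argument that handles the deficit groups one at a time.

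\textbf{Step 2 (reduction to a vertex).} Let $\b_k=\U[k,\ell]/(M|\N_k|)$. Then $1-A(\b,\beta)=\mathrm{EDE}(\b)/\mu(\b)$, where $\mathrm{EDE}$ is the power mean of order $1-\beta\in(0,1)$. This power mean is concave, positive and $1$-homogeneous, and $\mu$ is linear and positive. Hence every superlevel set $\{\b:\mathrm{EDE}(\b)\ge c\,\mu(\b)\}$ is convex, so the ratio is quasi-concave. The feasible set $\{(r_1,\dots,r_K): r_k\ge 0,\ \sum_k r_k=R\}$ is a simplex, and $\b$ depends affinely on $r$. Therefore the minimum of $1-A$, which is the maximum of $A$, over this set is attained at a vertex, namely an allocation placing all of $R$ on a single group $j$. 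Integrality only shrinks the feasible set, so the vertex value remains an upper bound.

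\textbf{Step 3 (choosing the worst vertex).} It remains to show that the worst vertex is $j=1$, the group with the smallest $|\N_1|$. This vertex yields exactly the displayed expression, including the given $\mu$. The base vector $b^0_k=\lfloor \x_k M\rfloor/(M|\N_k|)$ is essentially the constant $c=1/|\N|$ by Eq.~\eqref{eqn:zeroatkinson}, up to a downward perturbation of at most $1/(M|\N_k|)$. Adding $R$ to group $j$ raises coordinate $j$ by $\delta_j=R/(M|\N_j|)$, and $\delta_j$ is largest at $j=1$. For the comparison I would parametrize a single raised coordinate $c+\delta$ on top of an equal vector. I would then verify that $A$ is increasing in $\delta$, by differentiating $\log\mathrm{EDE}-\log\mu$ and using strict concavity of $s\mapsto s^{1-\beta}$.

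\textbf{Main obstacle.} The base is not exactly equal: the floors perturb the coordinates unevenly, and the perturbations are larger for small groups. These perturbations could interfere with the monotonicity comparison across $j$. I would first try to absorb them by noting that the floor deficits also push small groups downward, so that raising group $1$ moves $\b$ farthest from balance. Failing that, I would compare vertices $j$ and $1$ directly through the quasi-concavity of Step 2, applied along the segment joining them.
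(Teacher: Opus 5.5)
Your outline has the same shape as the paper's proof. First an exclusion step gives $\U[k,\ell]\ge\lfloor \x_kM\rfloor$ for every $k$; then an extremal step puts all of $R$ on $\N_1$. Your Step 2 is correct, and it reduces to vertices more cleanly than the paper's Karamata-based exchange lemma: $1-A$ is quasi-concave, so over the surplus simplex $A$ is maximized at a vertex. The trouble is Steps 1 and 3, and neither is a matter of extra care.

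\textbf{Step 1 is false.} A straddling marginal $g(\lfloor \x_jM\rfloor+1,\alpha_j)$ can exceed $\lambda/M$, and can even exceed another group's last floor marginal. So knowing $\U[j,\ell]>\x_jM$ does not place $j$'s last gain below the continuous threshold. Take $|\N_1|=|\N_2|=3$, $|\N_3|=200$ and $M=103$, so that $\x M=(1.5,1.5,100)$, the floors are $(1,1,100)$ and $R=1$. Set $\alpha_3=1/2$ and $\alpha_1=\alpha_2\approx 0.00763$ from Eq.~\eqref{eqn:marginalequation}. Then $g(2,\alpha_1)\approx 5.12\times10^{-3}>g(100,\alpha_3)\approx 4.94\times10^{-3}$, and \ours returns $(2,2,99)$, one below the floor of $\N_3$. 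The paper's Case~B has the same hole: it asserts $s_k\in\{0,1\}$ without justification.

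\textbf{Step 3 is fatal, and the obstacle you flagged is real.} The floor deficit of $\N_1$ can be nearly a full unit $1/(M|\N_1|)$, so topping up $\N_1$ can be the \emph{most} balanced choice, not the least. Worse, the greedy can output a vertex that is worse than $j=1$, so the stated bound itself fails. Take $K=2$, $M=56$, $|\N_1|=329$, $|\N_2|=3031$:
\begin{itemize}
\item $\x_1M=329/60\approx5.483$ and $\x_2M=3031/60\approx50.517$, so the floors are $(5,50)$ and $R=1$.
\item Set $\alpha_2=1/56$ and $\alpha_1\approx0.0019435$ solving Eq.~\eqref{eqn:marginalequation}.
\item The two candidates for the last item are $g(6,\alpha_1)\approx3.5274\times10^{-4}$ and $g(51,\alpha_2)\approx3.5296\times10^{-4}$. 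All 55 floor marginals are larger than both, so \ours returns $(5,51)$.
\item The per-capita max/min ratio of $(5,51)$ is $\frac{51\cdot329}{5\cdot3031}\approx1.1072$. The bound's allocation $(6,50)$ has ratio $\frac{6\cdot3031}{50\cdot329}\approx1.1055$.
\item For $K=2$ the Atkinson index is strictly increasing in this ratio for every $\beta\in(0,1)$. So \ours exceeds the claimed bound; at $\beta=1/2$ it is about $6.47\times10^{-4}$ versus $6.28\times10^{-4}$.
\end{itemize}
The paper's exchange lemma fails on this very pair, since moving the surplus unit from $\N_2$ to $\N_1$ lowers $A$. Your fallback of comparing vertices along a segment cannot help either: quasi-concavity tells you the worse point is an endpoint, not which endpoint. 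No argument that only locates the greedy output in a polytope can yield the single stated expression. A correct bound must either range over all allocations permitted by the exclusion relations, including below-floor ones such as the window of Lemma~\ref{lem:similarmargin}, or use which groups the greedy actually tops up.
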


The bound in Theorem~\ref{thm:worstinequality} captures the worst-case inequality arising solely from item indivisibility. Its apparent looseness is primarily due to \LL{numerical} rounding, \ie Lemma~\ref{lem:similarmargin} shows that allocation of each group deviates from its ideal fractional share by at most $O(K)$, yielding a relative error of order $K/|\M_\ell|$. Consequently, when item partitions are sufficiently large compared to the number of groups, the induced Atkinson inequality is small, and the bound is effectively tight. Although the bound is formally pessimistic, it cannot approach $1$ under typical settings. In particular, doing so would require extremely small partitions, many groups, and highly imbalanced group sizes, which are atypical in practical applications. \textit{Thus, Theorem~\ref{thm:worstinequality} should be viewed as a robustness guarantee such that indivisibility introduces a bounded fairness loss that vanishes as partition sizes grow}.

\section{Optimality for General Inequality Metrics and Supplementary  Constraints}\label{sec:generalmetric} 

\subsection{General Inequality Metrics}\label{sec:generalInequality}

Besides the Atkinson Inequality, there are other common inequality metrics used to quantify fairness in the literature, including {\em statistical parity}~\cite{HertweckHL21}, {\em gini index}~\cite{Farris10,charles2022gini}, {\em nash welfare}~\cite{kaneko1979nash,CaragiannisKMPS19}, and {\em counterfactual fairness}~\cite{KusnerLRS17}. The continuous version \oursc is applicable to a broader range of inequality metrics. That said, given any general inequality metric, there always exists a setting of $\alpha\in (0,1)^K$ such that the corresponding output of \oursc yields the minimum inequality w.r.t. the given inequality metric. Formally, we have the following result. 

\begin{corollary}\label{cor:generalmetrics}
Let $\I(\x)$ be an inequality metric and $\x^\ast := \arg \min I(\x)$ s.t.  $\sum^K_{k=1} \x_k=1, \x \in (0,1)^K$ exist. There exists a parameter setting $\alpha^\ast = (\alpha_1, \alpha_2, \ldots, \alpha_K) \in (0,1)^K$ s.t.  $\x^\ast$ is the output of \oursc.
\end{corollary}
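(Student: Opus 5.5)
The plan is to reduce the corollary to Theorem~\ref{thm:generalalpha} combined with Lemma~\ref{lem:kkt}. The key observation is that the inequality metric $\I$ enters only through the target point $\x^\ast$. Once $\x^\ast$ is fixed, we never need to optimize $\I$ directly. It is enough to produce an $\alpha^\ast$ for which $\x^\ast$ is the unique maximizer of the concave program in Eq.~\eqref{eqn:concave}, since that maximizer is exactly what \oursc returns.

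\textbf{Step 1 (fixing the target).} By hypothesis, $\x^\ast$ exists, lies in $(0,1)^K$, and satisfies $\sum_k \x^\ast_k = 1$. After reindexing the pairs $(\x^\ast_k,\alpha_k)$ so that $\x^\ast_1\ge \cdots\ge \x^\ast_K$, as in Section~\ref{sec:alpha}, $\x^\ast$ meets all the assumptions of Theorem~\ref{thm:generalalpha}.

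\textbf{Step 2 (constructing $\alpha^\ast$).} Theorem~\ref{thm:generalalpha} gives an $\alpha^\ast\in(0,1)^K$ with $\alpha_1\x_1^{\ast\,\alpha_1-1}=\cdots=\alpha_K\x_K^{\ast\,\alpha_K-1}$. Concretely, we choose any $\alpha_1\in(0,1)$, set $\lambda=\alpha_1\x_1^{\ast\,\alpha_1-1}$, and then, for $k=2,\dots,K$, use Lemma~\ref{lem:fixportion} and the continuity argument that follows it to solve $\alpha_k\x_k^{\ast\,\alpha_k-1}=\lambda$ with $\alpha_k\in(0,\alpha_{k-1}]$.

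\textbf{Step 3 (identifying the output of \oursc).} By Lemma~\ref{lem:kkt}, for this $\alpha^\ast$ the objective $\sum_k \x_k^{\alpha_k}$ is strictly concave on the simplex. It therefore has a unique global maximizer, and that maximizer is characterized by the KKT condition $\alpha_k\x_k^{\alpha_k-1}=\lambda$. Because the problem is concave, the KKT conditions are sufficient as well as necessary, and $\x^\ast$ is interior with equal marginals. Hence $\x^\ast$ is this unique maximizer. By Edmonds' Greedy Theorem, as invoked before Eq.~\eqref{eqn:marginalequation}, the continuous greedy \oursc terminates at the optimum of Eq.~\eqref{eqn:concave}. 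Uniqueness then forces the output to equal $\x^\ast$. Since \oursc treats partitions independently, the same argument applies to every $\M_\ell$ with the same $\alpha^\ast$.

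\textbf{Main obstacle.} The delicate point is Step 2 for arbitrary $\x^\ast$: we must make sure every equation $\alpha_k\x_k^{\alpha_k-1}=\lambda$ has a root in $(0,1)$. I would handle this with the intermediate value theorem applied to $h(a)=a\,\x_k^{a-1}$. As $a\to 0^+$, $h(a)\to 0$. At $a=\alpha_{k-1}$, we have $h(\alpha_{k-1})=\alpha_{k-1}\x_k^{\alpha_{k-1}-1}$. Since $\x_k\le \x_{k-1}$ and the exponent $\alpha_{k-1}-1$ is negative, this is at least $\alpha_{k-1}\x_{k-1}^{\alpha_{k-1}-1}=\lambda$. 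Continuity of $h$ then gives a root in $(0,\alpha_{k-1}]$. This is exactly the content of Lemma~\ref{lem:fixportion}, so we can cite that lemma rather than redo the argument.

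A secondary subtlety is a tie $\x^\ast_k=\x^\ast_{k-1}$, where the root may be $\alpha_k=\alpha_{k-1}$. That value is still in $(0,1)$, so the construction goes through.
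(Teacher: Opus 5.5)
Your proposal is correct and follows the paper's route. The paper states that the corollary follows directly from Theorem~\ref{thm:generalalpha}, which is itself the chained application of Lemma~\ref{lem:fixportion} that you describe in Step~2, and it identifies $\x^\ast$ as the output of \oursc via Lemma~\ref{lem:kkt} and the greedy-optimality claim, exactly as in your Step~3. Your explicit appeal to KKT sufficiency and strict concavity to get uniqueness simply spells out a step the paper leaves implicit.
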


Corollary~\ref{cor:generalmetrics} directly follows from  Theorem~\ref{thm:generalalpha} and reveals that \oursc can acquire the optimal solution across a range of inequality metrics $\I$. This result demonstrates that our framework \ours is applicable to a broad class of inequality metrics, with its optimal solution approximating the minimum inequality. Since the Atkinson inequality and Nash Welfare are widely adopted metrics in resource allocation applications, our study highlights these two measures. \vspace{-1mm}

\subsection{Welfare Maximization under Supplementary  Constraints}\label{sec:ExtCon}

\begin{algorithm}[!t]
\begin{small}
\caption{Adaptive Resource Allocation \oursa}\label{alg:RAC}
\KwIn{Agent set $\N$, Item set $\M$, inequality-aversion parameter $\alpha$, group number $K$, partition number $L$, supplementary constraints $\C=\{\C_1, \C_2, \cdots, \C_L\}$}
\KwOut{Utility matrix $\U$}
Initialize $\U\gets \{0\}^{K\times L}$\;
\For{$\ell \gets 1$ to $L$}
{
  \For{$t \gets 1$ to $|\M_\ell|$}
  {
    $\textrm{gain}[k] \gets \big(\tfrac{U[k,\ell]+1}{|M_\ell|}\big)^{\alpha_k}-\big(\tfrac{U[k,\ell]}{|M_\ell|}\big)^{\alpha_k}\ \textrm{for } k\in [K]$\;
    \While{ $\sum_{k\in[K]}\textrm{gain}[k] > 0$}
    {
     $\S^\ast \gets \arg\max_{k\in [K]}\textrm{gain}[k]$\;
     $k^\ast \sim \mathrm{Uniform}(\S^\ast)$\;
     $\textrm{gain}[k^\ast] \gets 0$\;
     \If{$\U[k^\ast,\ell]+1$ does not violate constraint $\C_\ell$}
     {
        $\U[k^\ast,\ell]\gets \U[k^\ast,\ell]+1$\;
        \KwBreak\;
     }
    }
  }
}
\Return $\U$\; 
\end{small}
\end{algorithm}

In real-world applications, there are usually supplementary application-specific box constraints to consider. For example, in university course assignments across multiple departments, each course is subject to a predefined maximum capacity, and courses need a minimum enrollment to be offered. Meanwhile,  students submit unique course selection requests for various courses. By considering students within the same department as a group, the objective is to maximize group fairness in course allocation across departments, while adhering to the course capacity constraints. In this scenario, \ours with a minor adaptation is still able to achieve the best possible group fairness while adhering to the supplementary constraints. We present the pseudo-code of \ours with adaptation, termed \oursa, in Algorithm~\ref{alg:RAC}.

Let $\C=\{\C_1, \C_2, \cdots, \C_L\}$ be the supplementary constraints for the $L$ item partitions, respectively. In general, each constraint $\C_\ell$ governs the allocation within the $\ell$-th partition with a group-wise separable upper bounds for $\ell \in [L]$. Similar to Algorithm~\ref{alg:RA}, when allocation is on partition $\M_\ell$, the index $k^\ast$ of the group with the largest marginal gain is identified before assignment. Subsequently, we check whether \LL{assigning an item from partition $\M_\ell$ to an agent in group $\N_{k^\ast}$} breaches the constraint $\C_\ell$. This process is repeated until we locate a group where the largest marginal gain can be achieved without violating the constraint $\C_\ell$. We demonstrate that with this minor  modification, \oursa is able to attain an allocation with the  maximum welfare, subject to the additional constraint $\C$. \LL{A subtle point is that unlike in inequality, there is no optimality gap in the maximum social welfare achieved by \ours and \oursa.}  

\begin{theorem}(Optimality of \oursa under Box Constraints)\label{thm:oursa}
Consider the allocation of item set $\M$ with $L$ partitions to agent set $\N$ of multiple groups, subject to diversity constraints and a set of per-partition constraints $\C = \{\C_1, \ldots, \C_L\}$. For each partition $\M_\ell$, constraint $\C_\ell$ is group-wise separable and is of the form $0 \le \U[k,\ell] \le c_{k,\ell}$ for some $c_{k,\ell} \in \mathbb{N}_+$. The allocation returned by \oursa maximizes the welfare for the given $\alpha$ configuration under the constraints $\C$. 
\end{theorem}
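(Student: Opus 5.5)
The welfare $\W(\U,\alpha)$ is separable across partitions, and each constraint $\C_\ell$ only involves the column $\U[\cdot,\ell]$. So it suffices to show that, for a fixed $\ell$, \oursa returns a column maximizing
\[
\textstyle\sum_{k\in[K]} \W(\U[k,\ell],\alpha_k)
\quad\text{subject to}\quad
\sum_k \U[k,\ell]=|\M_\ell|,\ \ 0\le \U[k,\ell]\le c_{k,\ell}.
\]
Implicitly we assume $\sum_k c_{k,\ell}\ge |\M_\ell|$, so that a feasible allocation exists. As in Theorem~\ref{thm:optimalG}, every item is assigned, because $\W$ is strictly increasing in each coordinate and there are more agents than items.

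The first step is to recast \oursa on partition $\ell$ as a plain greedy over a capped set of marginal gains. Each group $k$ contributes the list of marginal gains $g(1,\alpha_k),g(2,\alpha_k),\ldots,g(c_{k,\ell},\alpha_k)$. By Proposition~\ref{pro:monotone} this list is non-increasing. Inside the while loop, the algorithm zeroes out the gain of any group whose next unit would exceed $c_{k^\ast,\ell}$ and then moves to the next-best group. Hence each iteration picks the largest available next marginal gain among groups not yet at capacity. Since gains are positive for all uncapped groups, the while loop always finds such a group whenever total capacity remains.

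The second step uses an exchange argument, in the spirit of Lemma~\ref{lem:marginal}. Let $\U$ be the output and $\U'$ any feasible column with the same sum. Suppose $\U\neq\U'$. Then there exist groups $i,j$ with $\U[i,\ell]<\U'[i,\ell]\le c_{i,\ell}$ and $\U[j,\ell]>\U'[j,\ell]\ge 0$.

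Look at the moment group $j$ received its $\U[j,\ell]$-th unit. At that time group $i$ held at most $\U[i,\ell]<c_{i,\ell}$ units, so it was not capped and was available. Greedy choice and monotonicity then give
\[
g(\U'[j,\ell]+1,\alpha_j)\ \ge\ g(\U[j,\ell],\alpha_j)\ \ge\ g(\U[i,\ell]+1,\alpha_i)\ \ge\ g(\U'[i,\ell],\alpha_i).
\]
Now move one unit in $\U'$ from $i$ to $j$. The result stays feasible: $j$ remains at most $\U[j,\ell]\le c_{j,\ell}$, and $i$ stays nonnegative. The welfare does not decrease, and the $\ell_1$ distance to $\U$ drops by $2$. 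Iterating transforms $\U'$ into $\U$ without ever decreasing welfare, so $\W(\U,\alpha)\ge\W(\U',\alpha)$. Summing over $\ell$ completes the proof.

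The main obstacle is justifying the middle inequality $g(\U[j,\ell],\alpha_j)\ge g(\U[i,\ell]+1,\alpha_i)$ in the capped setting. We must check that group $i$ was genuinely a candidate when $j$'s last unit was assigned. Two facts are needed: counts only grow, and $i$'s final count is strictly below its cap, so $i$ was never blocked by $\C_\ell$. Ties resolved by uniform sampling only matter when gains are equal, and equal gains preserve the inequality.
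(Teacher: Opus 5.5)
Your proof is correct and is essentially the paper's argument. You reduce to one partition, invoke the greedy choice at the moment group $j$ received its last unit (with $i$ certified uncapped because $\U[i,\ell]<\U'[i,\ell]\le c_{i,\ell}$), and close with a one-unit exchange.

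\textbf{The case you leave out.} You assume $\sum_k c_{k,\ell}\ge|\M_\ell|$, but the theorem does not make this assumption. Your stated reason, ``so that a feasible allocation exists,'' comes from your equality formulation. In the paper's model items may stay unassigned ($\sum_i\X[i,j]\le 1$), so the per-partition constraint is $\sum_k\U[k,\ell]\le|\M_\ell|$ and feasible allocations always exist. The paper covers the remaining case directly (its Case one). If \oursa leaves items unassigned, then every group is at its cap, $\U[k,\ell]=c_{k,\ell}$ for all $k$. Every feasible column is then dominated coordinatewise, and monotonicity of $\W$ finishes. You should add this line.

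\textbf{Where your write-up is tighter.} You move the competitor toward the fixed greedy output, as in the proof of Theorem~\ref{thm:optimalG}, so every exchange is justified by the same inequality chain against $\U$. The paper instead moves $\U$ toward $\U^\circ$. There, only the first exchange is directly covered by the greedy condition at $\U$, and later steps tacitly rely on the intermediate states moving monotonically. Your full-assignment remark also handles competitors that use fewer than $|\M_\ell|$ items: under your capacity assumption they can be padded up within the caps. The paper's Case two passes over such competitors when it asserts both allocations assign exactly $M$ items.
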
\vspace{-1mm}

\section{Related Work}\label{sec:related}

\spara{Fair Resource Allocation under Diversity Constraints} Fair resource allocation under diversity constraints has been extensively studied across multiple domains, especially in the context of indivisible goods and group fairness. A prominent line of work focuses on incorporating diversity constraints into public housing allocation. \citet{BenabbouCHSZ18,BenabbouCZ19} investigate the Singapore housing system, where houses distributed across blocks are assigned to agents from different ethnic groups, with strict upper bounds on group-wise occupancy. They design a $\tfrac{1}{2}$-approximation algorithm for this constrained allocation problem. \citet{AGSW19} further reduce diversity constraints to regional quotas in polynomial time, unifying both formulations. Beyond additive valuations, \citet{BenabbouCIZ20} generalize the objective to matroid rank functions, allowing more expressive value systems under diversity constraints. While their solution achieves Nash social welfare maximization and envy-freeness up to one good (EF1), their focus remains on group-constrained allocation through submodular optimization. Inspired by these works, \citet{GrossHumbertBB21} incorporate both agents’ preferences and neighborhood similarity into a generic utility function, and explore sequential allocation and swap-based mechanisms. They find that the former lacks swap-stability, while the latter may reduce social welfare. \citet{BanerjeeE023} extend the setting by introducing priority-respecting allocations: each group has a quota and a priority-ordered list of eligible agents, and the goal is to ensure Pareto efficiency under quota, eligibility, and priority constraints. 

\spara{Relaxed Fairness for Indivisible Goods Allocation} While EF1 has been considered in group-based settings (e.g., \cite{BenabbouCIZ20}), another line of research focuses on relaxed fairness notions for individual agents in the allocation of indivisible goods. \citet{ProcacciaW14, KurokawaPW16, KurokawaPW18} initiate the study of maximin share (MMS) guarantees, proving that a $\tfrac{2}{3}$-approximation can be ensured relative to optimal divisible allocations. \citet{BarmanKV18} relax envy-freeness to EF1 and show its feasibility in polynomial time, further developing a pseudo-polynomial time algorithm to achieve both EF1 and Pareto efficiency. \citet{SegalHaleviS18} propose democratic fairness, where fairness is satisfied for a fraction of agents in each group. They show that a $\tfrac{1}{2}$ fraction is optimal under EF1 constraints. \citet{BenabbouCEZ19} extend this line to typewise fairness, introducing the notion of waste and demonstrating that maximizing marginal utility yields typewise EF1. \citet{GrossHumbertBB23} critique existing group envy-freeness notions and propose a new metric quantifying group envy, which can be approximated via sampling. \citet{ScarlettTZ23} analyze the simultaneous satisfaction of individual envy-freeness (i-EF) and group-weighted envy-freeness (g-WEF), showing polynomial-time algorithms under three valuation settings: (i) identical additive valuations across agents, (ii) group-shared valuations, and (iii) heterogeneous valuations.

\spara{Online Fair Allocation under Diversity Constraints} Recent work also addresses online settings where resources arrive sequentially and must be allocated in real time under diversity constraints. \citet{XX22} study the online allocation of goods by non-profit platforms, where each agent belongs to one or more groups and the system aims to ensure fair group-wise shares proportional to predefined ratios. They propose two sampling-based algorithms using linear programming formulations. \citet{LouisNNS23} consider online bipartite matching problems under two key constraints: proportional fairness, where assigned item proportions must lie within bounds; and diversity, where minimum group-wise allocation thresholds must be met. They provide approximation algorithms for these combined objectives. \citet{BeiLPW20} studies candidate selection under proportional fairness constraints and gives polynomial-time algorithms for finding the largest feasible subset. In the context of social commerce platforms, \citet{GuptaNCVRNDC23} explore exposure allocation from producers to resellers through social networks. The problem is modeled with two-sided cardinality constraints, ensuring each product is assigned to a minimum number of resellers and vice versa. A mixed-integer programming approach is proposed, approximating Nash social welfare in near-optimal ways. 

Existing methods rely on hard constraints to enforce diversity, which limits the flexibility of the optimization process and often results in suboptimal fairness. In contrast, our method \ours adopts an inequality-aversion parameterized mechanism, enabling it to achieve minimal inequality across a broad class of inequality metrics. Our results also extend to the case where there are application-specific supplementary constraints. 
\section{Experiments}\label{sec:exp}

\subsection{Experimental Settings}\label{sec:expsetting}

\spara{Datasets} We consider three real-world applications, \ie Singapore HDB allocation, course assignment in universities, and job recommendation. The details of the datasets are in Appendix~\ref{sec:extraExp}.

\begin{table}[!t]
\centering
\caption{Department-level statistics of course applications} \label{tbl:course}
\setlength{\tabcolsep}{0.5em}
\small
\vspace{-2mm}
\resizebox{0.48\textwidth}{!}{%
\begin{tabular}{@{}l|ccc@{}}
\toprule
{\bf Department} & {\bf \#Students} & {\bf \#Humanities Applications} & {\bf \#Math Applications} \\ \midrule
D01 & 25  & 81   & 24  \\
D02 & 61  & 75   & 70  \\
D03 & 65  & 198  & 31  \\
D04 & 290 & 1253 & 103 \\
D05 & 178 & 637  & 141 \\
D06 & 43  & 126  & 22  \\
D07 & 211 & 714  & 133 \\
D08 & 70  & 297  & 29  \\
D09 & 160 & 571  & 74  \\
D10 & 528 & 2547 & 544 \\
D11 & 80  & 326  & 28  \\
D12 & 268 & 1244 & 205 \\
D13 & 108 & 317  & 120 \\ \midrule
{\bf Total} & {\bf 2087} & {\bf 8386} & {\bf 1524} \\ \bottomrule
\end{tabular}}\vspace{-1mm}
\end{table}

\begin{figure*}[!t]
\centering
\begin{minipage}{0.31\linewidth}
\centering
\begin{small}
\begin{tikzpicture}
\begin{axis}[
    height=\columnwidth/1.3,
    width=\columnwidth/1,
    enlarge x limits=true,
    ymin=0, ymax=80000,
    xmin=0.8, xmax=5.2,
    xlabel={$\beta$ in {\em Atkinson Inequality}},
    ylabel={\em Inequality scores},
    xtick={{1},{2},{3},{4},{5}},
    xticklabels={0.1,0.2,0.4,0.5,0.8},
    xticklabel style={font=\small},
    ytick={10, 100, 1000, 10000},
    yticklabels={$10^{-4}$, $10^{-3}$,$10^{-2}$, $10^{-1}$},
    yticklabel style={font=\small},
    ymode=log,
    legend columns=3,
    legend style={fill=none,font=\footnotesize,at={(0.5,1.1)},anchor=center,draw=none},
]
    \addplot[line width=0.25mm,mark=o,color=red] coordinates {(1,9.45622384)(2,18.916)(3,37.843)(4,47.312)(5,75.735)};
    \addplot[line width=0.25mm,mark=diamond,color=cyan] coordinates {(1, 948.111)(2,1496.095)(3,4278.891)(4,7931.542)(5,11512.855)};
    \addplot[line width=0.25mm,mark=diamond,color=blue]  coordinates {(1, 4300.944)(2,8779.034)(3,18272.713)(4,23288.019)(5,39101.064)};
    \legend{{\ours},{\ATC},{\ProFair}}
\end{axis}
\end{tikzpicture}
\end{small}\vspace{-1mm}
\captionof{figure}{Fairness on Singapore HDB Allocation.} \label{fig:fairnessatk}
\end{minipage}\hspace{1mm}
\begin{minipage}{0.31\linewidth}
\centering
\begin{small}
\begin{tikzpicture}
\begin{axis}[
    height=\columnwidth/1.3,
    width=\columnwidth/1,
    enlarge x limits=true,
    ymin=10, ymax=1000,
    xmin=0.8, xmax=5.2,
    xlabel={$\beta$ in {\em Atkinson Inequality}},
    ylabel={\em Inequality scores},
    xtick={{1},{2},{3},{4},{5}},
    xticklabels={0.1,0.2,0.4,0.5,0.8},
    xticklabel style={font=\small},
    ytick={10, 100, 1000},
    yticklabels={$10^{-4}$, $10^{-3}$,$10^{-2}$},
    yticklabel style={font=\small},
    ymode=log,
    legend columns=3,
    legend style={fill=none,font=\footnotesize,at={(0.5,1.1)},anchor=center,draw=none},
]
    \addplot[line width=0.25mm,mark=o,color=red] coordinates {(1, 30.8609) (2, 61.5926) (3, 122.6575) (4, 152.9849) (5, 243.11)};
    \addplot[line width=0.25mm,mark=diamond,color=cyan] coordinates {(1, 60.35)(2,118.63)(3,233.189)(4,312.1416)(5,430.7666)};
    \addplot[line width=0.25mm,mark=diamond,color=blue]  coordinates {(1, 75.072550) (2, 199.116530) (3, 491.472530) (4, 611.666110) (5, 753.880170)};
    \legend{{\oursa},{\ATC},{\ProFair}}
\end{axis}
\end{tikzpicture}
\end{small}\vspace{-1mm}
\captionof{figure}{Fairness on Course Assignment.} \label{fig:coursefairness}
\end{minipage}\hspace{1mm}
\begin{minipage}{0.31\linewidth}
\centering
\begin{small}
\begin{tikzpicture}
\begin{axis}[
    height=\columnwidth/1.3,
    width=\columnwidth/1,
    enlarge x limits=true,
    ymin=0, ymax=1000,
    xmin=0.8, xmax=5.2,
    xlabel={$\beta$ in {\em Atkinson Inequality}},
    ylabel={\em Inequality scores},
    xtick={{1},{2},{3},{4},{5}},
    xticklabels={0.1,0.2,0.4,0.5,0.8},
    xticklabel style={font=\small},
    ytick={10, 100, 1000},
    yticklabels={$10^{-4}$,$10^{-3}$,$10^{-2}$},
    yticklabel style={font=\small},
    ymode=log,
    legend columns=2,
    legend style={fill=none,font=\footnotesize,at={(0.45,1.1)},anchor=center,draw=none},
]
    \addplot[line width=0.25mm,mark=o,color=red] coordinates {(1, 8.73) (2, 17.457) (3, 34.915) (4, 43.645) (5, 69.833)};
    \addplot[line width=0.25mm,mark=diamond,color=blue] coordinates {(1,30.418)(2,60.843)(3,121.709)(4,152.145)(5,243.447)};
    \legend{{\oursa},{\ProFair}}
\end{axis}
\end{tikzpicture}
\end{small}\vspace{-1mm}
\captionof{figure}{Fairness on Job Recommendation.} \label{fig:jobfairness}
\end{minipage}
\end{figure*}

\begin{figure*}[!t]
\centering
\begin{minipage}{0.31\linewidth}
\centering
\begin{small}
\begin{tikzpicture}
\begin{axis}[
    height=\columnwidth/1.3,
    width=\columnwidth/1,
    enlarge x limits=true,
    ymin=0, ymax=1.1,
    xmin=0.8, xmax=5.2,
    xlabel={Agent number $|\N|$},
    ylabel={Nash Welfare},
    xtick={{1},{2},{3},{4},{5}},
    xticklabels={1000, 2000, 4000, 5000, 8000},
    xticklabel style={font=\small},
    yticklabel style={font=\small},
    legend columns=3,
    legend style={fill=none,font=\footnotesize,at={(0.5,1.1)},anchor=center,draw=none},
]
    \addplot[line width=0.25mm,mark=o,color=red] coordinates {(1,0.99094144)(2,0.49547072)(3,0.24773536)(4,0.19818829)(5,0.12386768)};
    \addplot[line width=0.25mm,mark=diamond,color=cyan] coordinates {(1,0.98661418)(2,0.4920434)(3,0.24587645)(4,0.19676925)(5,0.12293423)};
    \addplot[line width=0.25mm,mark=diamond,color=blue]  coordinates {(1,0.86663796)(2,0.43331898)(3,0.21665949)(4,0.17332759)(5,0.10832975)};
    \legend{{\ours},{\ATC},{\ProFair}}
\end{axis}
\end{tikzpicture}
\end{small}\vspace{-1mm}
\captionof{figure}{Nash Welfare on Singapore HDB Allocation.} \label{fig:nashwelfare}
\end{minipage}\hspace{1mm}
\begin{minipage}{0.31\linewidth}
\centering
\begin{small}
\begin{tikzpicture}
\begin{axis}[
    height=\columnwidth/1.3,
    width=\columnwidth/1,
    enlarge x limits=true,
    ymin=0, ymax=1.1,
    xmin=0.8, xmax=6.2,
    xlabel={Course quota $\tau$},
    ylabel={Nash Welfare},
    xtick={{1},{2},{3},{4},{5}, {6}},
    xticklabels={0.05, 0.1, 0.15, 0.2, 0.25, 0.3},
    xticklabel style={font=\small},
    yticklabel style={font=\small},
    legend columns=3,
    legend style={fill=none,font=\footnotesize,at={(0.5,1.1)},anchor=center,draw=none},
]
    \addplot[line width=0.25mm,mark=o,color=red] coordinates {(1, 0.39448832) (2, 0.77793705) (3, 0.95114229) (4, 1.04108268) (5, 1.06186439) (6, 1.07103885)};
    \addplot[line width=0.25mm,mark=diamond,color=cyan] coordinates {(1, 0.38671223) (2, 0.70263019) (3, 0.84097777) (4, 0.92946393) (5, 0.9731392) (6, 0.9884295)};
    \addplot[line width=0.25mm,mark=diamond,color=blue]  coordinates {(1, 0.356649685) (2, 0.581459662) (3, 0.6838966) (4, 0.787641715) (5, 0.781294309) (6, 0.797594724)};
    \legend{{\oursa},{\ATC},{\ProFair}}
\end{axis}
\end{tikzpicture}
\end{small}\vspace{-1mm}
\captionof{figure}{Nash Welfare on Course Assignment.} \label{fig:nashwelfarecourse}
\end{minipage}\hspace{1mm}
\begin{minipage}{0.31\linewidth}
\centering
\begin{small}
\begin{tikzpicture}
\begin{axis}[
    height=\columnwidth/1.3,
    width=\columnwidth/1,
    enlarge x limits=true,
    ymin=0, ymax=3.8,
    xmin=0.8, xmax=6.2,
    xlabel={Job quota $\tau$},
    ylabel={Nash Welfare},
    xtick={{1},{2},{3},{4},{5}, {6}},
    xticklabels={0.05, 0.1, 0.15, 0.2, 0.25, 0.3},
    xticklabel style={font=\small},
    yticklabel style={font=\small},
    legend columns=2,
    legend style={fill=none,font=\footnotesize,at={(0.45,1.1)},anchor=center,draw=none},
]
    \addplot[line width=0.25mm,mark=o,color=red] coordinates {(1, 1.34619028) (2, 2.12835937) (3, 2.67165029) (4, 3.07325984) (5, 3.39537625) (6, 3.66379047)};
    \addplot[line width=0.25mm,mark=diamond,color=blue] coordinates {(1, 0.35597966) (2, 0.45076613) (3, 0.53665463) (4, 0.61475253) (5, 0.68153246) (6, 0.741721766)};
    \legend{{\oursa},{\ProFair}}
\end{axis}
\end{tikzpicture}
\end{small}\vspace{-1mm}
\captionof{figure}{Nash Welfare on Job Recommendation.} \label{fig:nashwelfarejob}
\end{minipage}
\end{figure*}

\begin{figure*}[!t]
  \centering
  \begin{minipage}[t]{0.31\textwidth}
    \centering
    \begin{tikzpicture}[scale=1,every mark/.append style={mark size=1.5pt}]
        \begin{axis}[
            height=\columnwidth/1.3,
            width=\columnwidth/1,
            xlabel={$\beta$ in {\em Atkinson Inequality}},
            ylabel={\em Inequality scores},
            xmin=0.5, xmax=5.5,
            ymin=0, ymax=500000,
            xtick={1,2,3,4,5},
            xticklabels={0.1,0.2,0.4,0.5,0.8},
            ytick={10,1000,100000},
            yticklabels={$10^{-4}$,$10^{-2}$,$1$},
            xticklabel style={font=\scriptsize},
            yticklabel style={font=\footnotesize},            
            ymode=log,
            legend columns=2,
            legend style={fill=none,font=\footnotesize,at={(0.5,1.15)},anchor=center,draw=none},
        ]
        \addplot[line width=0.25mm,mark=o,color=red]
        coordinates {
            (1, 9.45622384) (2, 18.916) (3, 37.843) (4, 47.312) (5, 75.735)};      
        \addplot[line width=0.25mm,mark=diamond,color=blue]
        coordinates {
            (1, 30227.8269) (2, 62612.9978) (3,113514.4062) (4, 151471.0099) (5, 297517.1859)};       
        \legend{\ours, \ours$_{\textrm{rand}}$}
        \end{axis}
    \end{tikzpicture}\vspace{-1mm}
    \caption{Impacts of $\alpha$ on Fairness.}
    \label{fig:alphafairness}
  \end{minipage}\hspace{1mm}
  \begin{minipage}[t]{0.31\textwidth}
    \centering
    \begin{tikzpicture}[scale=1,every mark/.append style={mark size=1.5pt}]
        \begin{axis}[
            height=\columnwidth/1.3,
            width=\columnwidth/1,
            xlabel={Agent number $|\N|$},
            ylabel={Nash Welfare},
            xmin=0.5, xmax=5.5,
            ymin=0, ymax=1.1,
            xtick={1,2,3,4,5},
            xticklabels={1000, 2000, 4000, 5000, 8000},
            xticklabel style={font=\scriptsize},
            yticklabel style={font=\footnotesize},
            legend columns=2,
            legend style={fill=none,font=\footnotesize,at={(0.5,1.15)},anchor=center,draw=none},
        ]
        \addplot[line width=0.25mm,mark=o,color=red]
        coordinates {
            (1, 0.99094) (2, 0.49547) (3, 0.24774) (4, 0.19819) (5, 0.12387)};       
        \addplot[line width=0.25mm,mark=diamond,color=blue]
        coordinates {
            (1, 0.12435) (2, 0.12388) (3,0.10940) (4, 0.08901) (5, 0.06274)};        
        \legend{\ours, \ours$_{\textrm{rand}}$}
        \end{axis}
    \end{tikzpicture}\vspace{-1mm}
    \caption{Impacts of $\alpha$ on Nash Welfare.}
    \label{fig:alphawelfare}
  \end{minipage}\hspace{1mm}
  \begin{minipage}[t]{0.31\textwidth}
    \centering
    \begin{tikzpicture}[scale=1,every mark/.append style={mark size=1.5pt}]
        \begin{axis}[
            height=\columnwidth/1.3,
            width=\columnwidth/1,
            xlabel={Parameter $\alpha_1$},            
            xmin=0.5, xmax=5.5,
            ymin=0, ymax=0.6,
            xtick={1,2,3,4,5},
            xticklabels={0.01,0.1,0.2,0.4,0.8},
            xticklabel style={font=\scriptsize},
            yticklabel style={font=\footnotesize},
            legend columns=2,
            legend style={fill=none,font=\footnotesize,at={(0.5,1.15)},anchor=center,draw=none},
        ]
        \addplot[line width=0.25mm,mark=o,color=red]
        coordinates {(1, 0.45428) (2, 0.47312) (3, 0.47312) (4, 0.47312) (5, 0.47312)};      
        \addplot[line width=0.25mm,mark=diamond,color=blue]
        coordinates {(1, 0.49547) (2, 0.49547) (3,0.49547) (4, 0.49547) (5, 0.49547)};        
        \legend{Ineq. ($\times 10^{-3}$), Welfare}
        \end{axis}
    \end{tikzpicture}\vspace{-1mm}
    \caption{Sensitivity of $\ours$ on $\alpha_1$.}
    \label{fig:alpha1}
  \end{minipage}
\end{figure*}

\spara{Baselines} In the research of group fairness under diversity constraints, two existing methods \ATC~\cite{BenabbouCHSZ18} and \ProFair~\cite{LouisNNS23} are closely related. We compare \ours against them for the three applications under the fairness and diversity constraints. Distinct from \ours, both \ATC and \ProFair utilize explicit hard constraints for each group of agents in the optimization problem.

\spara{Settings} In the Singapore HDB allocation, the set of flats (items) from five partitions follows Fig.~\ref{fig:HDBdata} as $\M=\{218, 114, 211, 327, 120\}$ in a total of $990$ houses. We consider four groups of agents, \ie {\em Chinese}, {\em Malays}, {\em Indians}, and {\em Others}. We fix their ratios as $\{74.0\%, 13.5\%, 9.0\%, 3.4\%\}$ and consider an agent set with $|\N|=2000$. We set parameters $\alpha_1 = 0.1$ and $\epsilon = 0.00001$ for \ours as default. By following the settings in~\cite{BenabbouCHSZ18,LouisNNS23}, we utilize random utility for \ATC and adopt uniform utility for \ours and \ProFair. 

In the assessment, we adopt metrics {\em Atkinson inequality} and {\em Nash welfare} (NW)~\cite{BenabbouCIZ20,Kell023, GuptaNCVRNDC23} to quantify the fairness and welfare among groups, respectively. For Atkinson inequality in Equation~\eqref{eqn:atkinson}, we consider $\beta\in\{0.1,0.2,0.4,0.5,0.8\}$. For HDB allocation, 
we vary agent number $|\N|=\{1000, 2000, 4000, 5000, 8000\}$ while keeping the number of flats constant at $990$. For course assignment and job recommendation, we vary the corresponding resource quota $\tau \in \{0.05, 0.1, 0.15, 0.2, 0.25, 0.3\}$ to measure the robustness of tested methods in terms of Nash Welfare. 

\subsection{Fairness \& Welfare in Singapore HDB Allocation}\label{sec:HDBresults}

Fig.~\ref{fig:fairnessatk} and Fig.~\ref{fig:nashwelfare} display the inequality scores and Nash welfare, respectively, of the three tested methods on the Singapore HDB dataset. For the benefit of illustration, we scale the scores by a factor of $10^5$. The inequality is measured using the Atkinson inequality, parameterized by $\beta\in (0,1)$, which modulates the sensitivity to distribution variances. \eat{As $\beta$ increases, the index becomes more responsive to variations in inequality.} We examine the results across a spectrum of $\beta$ values. We observe that \ours achieves significantly smaller inequality scores than those of the other two methods across all tested $\beta$ values. In particular, the inequality value achieved by \ours ranges  from $9.46 \times 10^{-5}$ to $75.74 \times 10^{-5}$, approximately \LL{$0.6\% - 1.0\%$} and $0.2\%$ of the scores obtained  by \ATC and \ProFair, respectively. In Theorem~\ref{thm:generalalpha}, we prove that by judiciously choosing the parameter $\alpha$, our model \ours can ideally attain the minimal inequality value of zero. The non-zero scores observed in \ours result from integer numerical errors, as proved in Theorem~\ref{thm:worstinequality}. In addition, inequality scores rise along with the increase of $\beta$ as expected. This is consistent with the understanding that larger $\beta$ values exhibit higher sensitivity to inequality.

In Fig.~\ref{fig:nashwelfare}, \ours achieves the highest Nash welfare, exhibiting a marginal benefit compared to \ATC and a significant advantage over \ProFair. Specifically, for agent number $|\N|=1000$, \ours achieves Nash welfare of $0.991$ while the corresponding values of \ATC and \ProFair are $0.987$ and $0.867$ respectively. Similar scenarios are observed in the remaining cases. Meanwhile, Nash welfare diminishes as the number of agents grows while the available resources (flats) remain constant, attributable to a decreased share per agent.

\subsection{Fairness \& Welfare in Course Assignment}\label{sec:courseresults}

Fig.~\ref{fig:coursefairness} and Fig.~\ref{fig:nashwelfarecourse} respectively illustrate the inequality scores and Nash welfare achieved by the three methods in the course assignment task. As shown, the results in Fig.~\ref{fig:coursefairness} follow a similar pattern in Fig.~\ref{fig:fairnessatk}. Specifically, \oursa achieves the lowest inequality scores compared with the other two baselines. In particular, the Atkinson inequality scores by \oursa are around $50\% - 56\%$ and $24\% - 40\%$ of those scores by \ATC and \ProFair, respectively, ranging from $30.86\times 10^{-5}$ to $243.11\times 10^{-5}$. As discussed in Section~\ref{sec:HDBresults}, these scores exhibit a rising trend with increasing values of the parameter~$\beta$.

In Fig.~\ref{fig:nashwelfarecourse}, we observe that \oursa consistently achieves the highest Nash welfare values among the three methods with a clear advantage across the variations of quotas. When the quota $\tau=0.15$, the Nash welfare of \oursa is $1.13\times$ and $1.39\times$ of the Nash welfare by \ATC and \ProFair, respectively. Furthermore, Table~\ref{tbl:courseallocate} reports the number of courses successfully allocated under the given constraints by the three evaluated methods. As evidenced by the results, \oursa achieves the highest allocation count, demonstrating its superior effectiveness.

\subsection{Fairness \& Welfare in Job Recommendation}\label{sec:jobresults}

\begin{table}[!t]
\centering
\caption{Total number of allocated courses} \label{tbl:courseallocate}
\setlength{\tabcolsep}{0.5em}
\small
\vspace{-2mm}
\begin{tabular}{@{}l|cc@{}}
\toprule
{\bf Methods} & {\bf \#Humanities} & {\bf \#Mathematics}  \\ \midrule
\ours & 1484  & 766    \\
\ATC & 1195  & 565  \\
\ProFair & 1398  & 689 \\ \bottomrule
\end{tabular}
\end{table}

The Fairness and Nash welfare outcomes for the job recommendation task are shown in Fig.~\ref{fig:jobfairness} and Fig.~\ref{fig:nashwelfarejob}, respectively. Due to an out-of-memory (OOM) error on our server (16GB RAM), the baseline \ATC could not be evaluated, and thus only the results of \oursa and \ProFair are reported. As illustrated, \oursa consistently achieves substantially lower inequality scores and higher Nash welfare compared to \ProFair. In particular, the inequality score is reduced by up to $28\%$, while the Nash welfare improves by as much as $5\times$, aligning with the trends observed in the previous two applications. The findings provide strong evidence for the superior performance of \oursa. 

\subsection{Ablation Study}\label{sec:ablation}

\spara{Impacts of $\alpha$} To evaluate the crucial role of $\alpha$ in \ours, we develop a variant of \ours by adopting randomized $\alpha$ for comparison, termed as \ours$_{\textrm{rand}}$. Specifically, we randomize $\alpha$ in $(0,1)$ and report the average performance of \ours$_{\textrm{rand}}$ through $10$ runs. We compare \ours with \ours$_{\textrm{rand}}$ in terms of both inequality scores and Nash welfare in Figure~\ref{fig:alphafairness} and Figure~\ref{fig:alphawelfare}, respectively. Similar to Figure~\ref{fig:fairnessatk}, the values have been scaled by a factor of $10^5$ for better demonstration.

As shown in Figure~\ref{fig:alphafairness}, inequality scores of \ours$_{\textrm{rand}}$ are $10^4$ times larger than those of \ours. In the scenario where $\beta=0.8$ with the highest sensitivity to inequality, \ours achieves the inequality score of $75.735 \times 10^{-5}$. In contrast, the score of \ours$_{\textrm{rand}}$ attains a substantially higher score of $297517 \times 10^{-5}$, exhibiting a significant four-order-of-magnitude disparity between the two. Nash welfare is shown in Figure~\ref{fig:alphawelfare}. When agent number $|\N|$ increases from $1000$ to $8000$, welfare by \ours drops from $0.991$ to $0.124$ while the value by \ours$_{\textrm{rand}}$ maintains around $0.1$. This shows that random $\alpha$ settings can lead to arbitrarily worse allocation. These observations clearly show the crucial role of $\alpha$ in \ours.

\spara{Robustness of \ours in $\alpha_1$} In Algorithm~\ref{alg:alphas}, we set $\alpha_1=0.1$ by default in the calculation of all remaining $\alpha$ values. To examine the sensitivity of \ours towards the initial value of $\alpha_1$, we initialize $\alpha_1 \in \{0.01,0.1,0.2,0.4,0.8\}$ and then report the corresponding fairness and Nash welfare in Figure~\ref{fig:alpha1}. As shown, both the inequality scores and Nash Welfare remain relatively stable despite the dramatic increase of the initial $\alpha_1$. This finding substantiates the robust performance of \ours across varying values of $\alpha_0$. Specifically, for any $\alpha_1 \in (0,1)$, Algorithm~\ref{alg:alphas} consistently identifies an $\alpha$ setting that enables \ours to attain near-optimal performance.

\vspace{-2mm}
\section{Conclusion}\label{sec:conclusion}

In this paper, we proposed \ours, a parameterized framework for resource allocation under diversity constraints. \ours achieves welfare-optimal allocations while flexibly controlling inequality across groups. We established fairness optimality for a broad class of inequality metrics. The inequality gap of indivisible items from the divisible optimum is also established. Furthermore, we extend the framework to an adaptive version \oursa that accommodates supplementary application-specific constraints while preserving welfare optimality. Extensive experiments on real-world applications validate the effectiveness and robustness of our approach.

\sketch{
\section{Acknowledgments}
Huang's research was supported by Fundamental and Interdisciplinary Disciplines Breakthrough Plan of the Ministry of Education of China (No. JYB2025XDXM118), National Natural Science Foundation of China (No. U25B2023), and Hubei Provincial Natural Science Foundation of China (No. 2026AFA002). Xiao's research was supported by the Ministry of Education, Singapore, under its MOE AcRF TIER 3 Grant (MOE-MOET32022-0001). Lakshmanan’s research was supported by an NSERC grant (RGPIN2020-05408) and an NSERC Alliance Grant (ALLRP 599949 – 24).
}

\clearpage
\bibliographystyle{ACMRefer}
\balance
\bibliography{ref}

\clearpage
\appendix
\section{Appendix}\label{sec:app}

\subsection{Experimental Datasets}\label{sec:extraExp}

\noindent
{\bf (i) Singapore HDB Dataset}.  As shown in Figure~\ref{fig:HDBdata}, the available HDB flats are located in five regions in total of $990$ flats, namely West region with $218$ flats, North region with $114$ flats, North-East region with $211$ flats, Central region with $327$ flats, and East region with $120$ flats. Meanwhile, according to the Department of Statistics of Singapore in 2023~\cite{Population23}, there are $74.0\%$ {\em Chinese}, $13.5\%$ {\em Malays}, $9.0\%$ {\em Indians}, and $3.4\%$ {\em others} among the population. \\ {\bf (ii) Course Assignment Dataset}~\cite{LouisNNS23}. This dataset captures a university course assignment setting involving students from $13$ departments where students are required to enroll in two categories of courses: mathematics and humanities. Within each category, there are multiple individual courses, each subject to a predefined capacity constraint. In total, mathematics courses offer $931$ quotas, while humanities courses offer $3197$ quotas. To ensure equitable access for each course, no more than $\tau=30\%$ of its capacity can be allocated to students from any single department. The dataset statistics are in Table~\ref{tbl:course}. {\bf (iii) Job Recommendation Dataset}~\cite{VladimirovaPD24}. \arxiv{This dataset is collected for fairness investigation in job recommendation. It contains $30361$ job applicants and $57355$ job advertisements. The dataset exhibits extreme class imbalance with an overall click-through rate of only $0.7\%$. In particular, each applicant has $41$ job-hunting related features (6 categorical and 35 numerical attributes) and one protected attribute such as gender, and each job has $7$ categorical attributes for description. The label is a binary value, indicating if job advertisements are clicked by job applicants. For our purpose, we preprocess this dataset as follows. First, we perform cold-start data splitting by dividing job applicants and jobs into disjoint sets with a 60/20/20 ratio for training, validation, and testing. Second, we apply label encoding to categorical features and standard scaling to numerical features. Third, to address the severe class imbalance, we apply negative sampling with a 5:1 ratio during training while retaining all samples for evaluation. Finally, we train a two-tower model that generates job applicants and jobs embeddings through separate neural networks, computing the relevant score between a job applicant and a job. Eventually, the generated matrix records the protected attributes of $6073$ job applicants and their relevant scores on $11471$ advertised jobs.}\sketch{This dataset is collected for fairness investigation in job recommendation, containing $30{,}361$ job applicants and $57{,}355$ job advertisements with an overall click-through rate of $0.7\%$. Each applicant has $41$ features (6 categorical and 35 numerical) and one protected attribute (e.g., gender), while each job contains $7$ categorical attributes. The binary label indicates whether a job advertisement is clicked by a job applicant. For preprocessing, we perform cold-start splitting with a 60/20/20 ratio for training, validation, and testing, apply label encoding and feature standardization, and use 5:1 negative sampling during training to alleviate class imbalance. We then train a two-tower model to generate applicant and job embeddings and compute their relevance scores. The resulting matrix records the protected attributes of $6{,}073$ job applicants and their relevance scores on $11{,}471$ jobs.
}

\subsection{Formal Proofs}\label{sec:proofs}

\begin{proof}[Proof of Lemma~\ref{lem:marginal}]
When allocation on partition $\M_\ell$, let $\U$ be the corresponding utility matrix after the $t$-th allocation and before the $(t+1)$-th allocation for $t\in [|\M_\ell|]$, and let $i,j\in [K]$ be the group indices. We assume $\U[j,\ell]\ge 1$ after the $t$-th allocation, \ie at least one item assigned to the $j$-th group. When $i=j$, it is trivial to know that $g(\U[j,\ell],\alpha_j) > g(\U[i,\ell]+1,\alpha_i)$ according to Proposition~\ref{pro:monotone}. When $i\neq j$, we consider two cases in terms of the iteration when the number of items assigned to the $j$-th group reaches $\U[j,\ell]$.

\spara{Case 1} The number of items assigned to the $j$-th group reaches $\U[j,\ell]$ at the $t$-th iteration. In this case, we know the number of items assigned to each group right before the $t$-th iteration are $\{\U[1,\ell],\cdots, \U[j-1,\ell], {\bf \U[j,\ell]-1}, \U[j+1,\ell], \cdots, \U[K,\ell]\}$. Then for the $t$-th iteration, we know that $g(\U[j,\ell],\alpha_j)=\max\{g(\U[1,\ell]+1,\alpha_1),\cdots,g(\U[j,\ell],\alpha_j),\cdots,g(\U[K,\ell]+1,\alpha_K)\}$. As a consequence, $g(\U[j,\ell],\alpha_j) \ge g(\U[i,\ell]+1,\alpha_i)$ for $\forall i,j\in [K]$ holds.

\spara{Case 2} The number of items assigned to the $j$-th group reaches $\U[j,\ell]$ at the $t^\prime$-th iteration with $t^\prime<t$. Let $\U^\prime$ be the utility matrix right before the $t^\prime$-th iteration. Thus we have $\U^\prime[j,\ell]=\U[j,\ell]-1$ and $\U^\prime[i,\ell] \le \U[i,\ell]$. Similarly, at this $t^\prime$-th iteration, we have $g(\U[j,\ell],\alpha_j)=\max\{g(\U^\prime[1,\ell]+1,\alpha_1),\cdots,g(\U[j,\ell],\alpha_j)),\cdots,g(\U^\prime[K,\ell]+1,\alpha_K)\}$. Therefore, we know $g(\U[j,\ell],\alpha_j) \ge  g(\U^\prime[i,\ell]+1,\alpha_i) \ge g(\U[i,\ell]+1,\alpha_i)$ according to Proposition~\ref{pro:monotone}. In particular, for the case of the last iteration of $t=|\M_\ell|$, the above two cases still apply, which completes the proof.
\end{proof}

\begin{proof}[Proof of Theorem~\ref{thm:optimalG}]

Since the welfare objective is additive over partitions, and the global capacity constraint is never binding under our setting as $\sum_{\ell}U[k,\ell]\le |N_k|$, it suffices to prove optimality independently for each fixed partition $\mathcal M_\ell$. Hence, fix an arbitrary partition $\M_\ell$ with $|\M_\ell|$ items. Since the welfare
$W(\U,\alpha)=\sum_{\ell\in[L]}\sum_{k\in[K]}\left(\frac{\U[k,\ell]}{|\M_\ell|}\right)^{\alpha_k}$
is separable across partitions and Algorithm~\ref{alg:RA} allocates items independently for each $\ell$, it suffices to prove that for this fixed $\ell$, utility matrix $\U[\cdot,\ell]$ returned by Algorithm~\ref{alg:RA} maximizes $\sum_{k\in[K]}\left(\frac{\U[k,\ell]}{|\M_\ell|}\right)^{\alpha_k}$ subject to $\sum_{k\in[K]} \U[k,\ell]=|\M_\ell|$ and $\U[k,\ell]\in\mathbb{N}_+$ (proved in Corollary~\ref{cor:nonempty}).

Let $\U[\cdot,\ell]$ be the final greedy allocation on partition $\M_\ell$. Lemma~\ref{lem:marginal} implies that at termination we have the discrete optimality condition
\begin{equation}\label{eq:terminal-g}
g(\U[j,\ell],\alpha_j)\ \ge\ g(\U[i,\ell]+1,\alpha_i),\qquad \forall i,j\in[K],
\end{equation}
\ie every \emph{last allocated} marginal gain is at least every \emph{next unallocated} marginal gain.

Consider any other feasible integer allocation $\U^\prime[\cdot, \ell]$ with $\sum_{k\in[K]} \U^\prime[k, \ell]=|\M_\ell|$. If $\U^\prime[\cdot, \ell]=\U[\cdot,\ell]$ we are done. Otherwise, since both sum to $|\M_\ell|$, there exist indices $i,j\in[K]$ such that $\U^\prime[i, \ell]\ge \U[i,\ell]+1$ and $\U^\prime[j, \ell]\le \U[j,\ell]-1$. Construct $\U^\ast[\cdot, \ell]$ by moving one item from group $i$ to group $j$: $\U^\ast[i,\ell]=\U^\prime[i,\ell]-1$, $\U^\ast[j,\ell]=\U^\prime[j,\ell]+1$, and $\U^\ast[k,\ell]=\U^\prime[k,\ell]$ for $k\notin\{i,j\}$. The welfare change on partition $\M_\ell$ equals $g(\U^\prime[j,\ell]+1,\alpha_j)-g(\U^\prime[i,\ell],\alpha_i)$. Using the diminishing-returns property of $g(\cdot,\alpha_k)$ and the relations
$\U^\prime[j,\ell]+1\le \U[j,\ell]$ and
$\U^\prime[i,\ell]\ge \U[i,\ell]+1$, we have
$g(\U^\prime[j,\ell]+1,\alpha_j)\ge g(\U[j,\ell],\alpha_j)$ and
$g(\U^\prime[i,\ell],\alpha_i)\le g(\U[i,\ell]+1,\alpha_i)$.
Therefore,
$g(\U^\prime[j,\ell]+1,\alpha_j)-g(\U^\prime[i,\ell],\alpha_i)
\ge g(\U[j,\ell],\alpha_j)-g(\U[i,\ell]+1,\alpha_i)\ge 0$,
where the last inequality follows from Eq.~\eqref{eq:terminal-g}.
Hence, the exchange from $\U^\prime[\cdot,\ell]$ to $\U^\ast[\cdot,\ell]$
does not decrease welfare.

Repeating the above exchange operation, we can transform $\U^\prime[\cdot,\ell]$ into $\U[\cdot,\ell]$ in finitely many steps, and welfare never decreases along the process. Thus the greedy allocation $\U[\cdot,\ell]$ attains welfare at least that of any feasible $\U^\prime[\cdot, \ell]$, \ie it is optimal for partition $\M_\ell$. Since partitions are independent and welfare is additive across $\ell$, the matrix $\U$
returned by Algorithm~\ref{alg:RA} maximizes $\W(\U,\alpha)$ for the given $\alpha$.
\end{proof}

\begin{proof}[Proof of Lemma~\ref{lem:fixportion}]

\arxiv{Given an initialized $\alpha_1\in (0,1), 0<\x_2\le \x_1<1$ , we prove the equation $\alpha_1 \x_1^{\alpha_1-1}=\alpha_2 \x_2^{\alpha_2-1}$ with $\alpha_2 \in (0,\alpha_1]$. By reorganizing equation $\alpha_1 \x_1^{\alpha_1-1}=\alpha_2 \x_2^{\alpha_2-1}$, we have 
\begin{align}
\alpha_1 \x_1^{\alpha_1-1} &= \alpha_2 \x_2^{\alpha_2-1}, & 
\tfrac{\alpha_1}{\x_1^{1-\alpha_1}} &=\tfrac{\alpha_2}{\x_2^{1-\alpha_2}}. \label{eqn:alphax}
\end{align}
First, it is clear that $\tfrac{\alpha_1}{\x_1^{1-\alpha_1}}>0$ and $\tfrac{\alpha_2}{\x_2^{1-\alpha_2}}>0$ for $\alpha_1, \alpha_2, \x_1, \x_2 \in (0,1)$. Let function $f(\alpha_2)=\tfrac{\alpha_2}{\x_2^{1-\alpha_2}}$ with $\alpha_2\in (0,\alpha_1]$. Since $\x_1 \ge \x_2$, we have $\tfrac{\alpha_1}{\x_1^{1-\alpha_1}}\le \tfrac{\alpha_1}{\x_2^{1-\alpha_1}}=f(\alpha_1)$ by setting $\alpha_2=\alpha_1$. Meanwhile, it is trivial that $f(\alpha_2) \rightarrow 0$ when $\alpha_2 \rightarrow 0$.

Next, we prove that $f(\alpha_2)$ is continuous on $(0,\alpha_1]$ such that there exists $\alpha^\ast \in (0,\alpha_1]$ to ensure $f(\alpha^\ast)=\tfrac{\alpha_1}{\x_1^{1-\alpha_1}}$. Equivalently, we rewrite $f(\alpha_2)=\tfrac{\alpha_2}{\x_2^{1-\alpha_2}}$ as $f(\alpha_2)=\alpha_2\e^{(\alpha_2-1)\ln{x_2}}$. We know that on $(0,\alpha_1]$, i) function $f_1(\alpha_2)=\alpha_2$ is continuous, ii) function $f_2(\alpha_2)=\alpha_2-1$ is continuous, and iii) the fact that $\e^{\alpha_2}$ is continuous indicates $f_3(\alpha_2)=\e^{(\alpha_2-1)\ln{x_2}}$ is continuous as the combination of continuous functions is continuous. Therefore,  $f(\alpha_2)=f_1(\alpha)f_3(\alpha_2)$ is continuous on $(0,\alpha_1]$. Note that $f(\alpha_2)$ is continuous on $(0,\alpha_1]$, and
$\lim_{\alpha_2\to 0^+} f(\alpha_2)=0, f(\alpha_1)=\frac{\alpha_1}{x_2^{1-\alpha_1}}
\ge \frac{\alpha_1}{x_1^{1-\alpha_1}}$. Therefore, by the intermediate value theorem, there exists $\alpha^\ast \in (0,\alpha_1]$ to ensure $\tfrac{\alpha^\ast}{\x_2^{1-\alpha^\ast}}=\tfrac{\alpha_1}{\x_1^{1-\alpha_1}}$, which completes the proof.}
\sketch{The proof sketch is to transform the equation
$\alpha_1 \x_1^{\alpha_1-1}=\alpha_2 \x_2^{\alpha_2-1}$
into $\tfrac{\alpha_1}{\x_1^{1-\alpha_1}}=\tfrac{\alpha_2}{\x_2^{1-\alpha_2}}$, and define the function
$f(\alpha_2)=\tfrac{\alpha_2}{\x_2^{1-\alpha_2}}$
over $\alpha_2\in(0,\alpha_1]$.
Since $\x_1\ge\x_2$, we have $f(\alpha_1)=\tfrac{\alpha_1}{\x_2^{1-\alpha_1}} \ge \tfrac{\alpha_1}{\x_1^{1-\alpha_1}}$. Meanwhile, $f(\alpha_2)\to 0$ as $\alpha_2\to0^+$. Because $f(\alpha_2)$ is continuous on $(0,\alpha_1]$, the intermediate value theorem guarantees the existence of some $\alpha^\ast\in(0,\alpha_1]$ such that
$f(\alpha^\ast)=\tfrac{\alpha_1}{\x_1^{1-\alpha_1}}$,
which proves the claim.}
\end{proof}

\begin{lemma}\label{lem:g2bound}
Let $M\ge 2$ be an integer. For any constant $\lambda\in(0,1/M]$, it holds that $g(2,\lambda) = (2/M)^{\lambda}-(1/M)^{\lambda} < (1/M)^{1/M}$.
\end{lemma}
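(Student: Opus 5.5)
We need to show $(2/M)^{\lambda}-(1/M)^{\lambda} < (1/M)^{1/M}$ for $\lambda\in(0,1/M]$ and $M\ge 2$. The plan is to factor the left side and bound the two resulting pieces separately.

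\emph{Factoring.} Write
\[
g(2,\lambda)=(1/M)^{\lambda}\,(2^{\lambda}-1).
\]

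\emph{Bounding the first factor.} Since $1/M<1$, the map $\lambda\mapsto(1/M)^{\lambda}$ is decreasing. Hence, for $\lambda\le 1/M$,
\[
(1/M)^{\lambda}\ge (1/M)^{1/M}.
\]
This lower bound goes in the wrong direction to use directly, so a crude bound on this factor is better. Because $M\ge 2$ and $\lambda>0$, we have $(1/M)^{\lambda}<1$. Also, $(1/M)^{1/M}=e^{-\ln M/M}\ge e^{-1/e}\approx 0.692$, since $\ln M/M$ is maximized at $M=e$.

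\emph{Bounding the second factor.} Since $2^{\lambda}$ is increasing in $\lambda$,
\[
2^{\lambda}-1\le 2^{1/M}-1\le 2^{1/2}-1\approx 0.414 .
\]

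\emph{Conclusion.} Combining the two bounds gives
\[
g(2,\lambda)<1\cdot(\sqrt2-1)<0.42<0.69<(1/M)^{1/M}.
\]
This is strict and holds for every $M\ge2$.

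\emph{Justifying $e^{-1/e}$ as the lower bound.} We need $\min_{M\ge2}(1/M)^{1/M}=e^{-\max_M \ln M/M}$. Let $h(x)=\ln x/x$. Then $h'(x)=(1-\ln x)/x^2$, so $h$ attains its maximum $1/e$ at $x=e$. Therefore $(1/M)^{1/M}\ge e^{-1/e}>0.69$.

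\emph{Main obstacle.} The only step needing care is confirming the numerical constants $\sqrt2-1<e^{-1/e}$. This is a fixed numerical comparison: $0.4143<0.6922$. If a tighter argument were desired, one could instead use $2^{\lambda}-1\le \lambda\ln 2\cdot 2^{\lambda}\le (\ln 2/M)\,2^{1/M}$. That bound tends to zero, but the crude constant comparison above already suffices.
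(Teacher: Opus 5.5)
Your proof is correct and follows essentially the same route as the paper: factor $g(2,\lambda)=(1/M)^{\lambda}(2^{\lambda}-1)$, drop the first factor via $(1/M)^{\lambda}\le 1$, bound $2^{\lambda}-1\le 2^{1/M}-1$ by a constant below $1/2$, and compare with a constant lower bound on $(1/M)^{1/M}$. The differences are minor. You use the direct bound $2^{1/M}-1\le\sqrt2-1$ instead of the paper's detour through $e^x-1\le xe^x$. You also justify $(1/M)^{1/M}\ge e^{-1/e}$ by maximizing $\ln x/x$, whereas the paper simply asserts that the integer minimum is at $M=3$ with $(1/3)^{1/3}>1/2$.
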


\arxiv{
\begin{proof}[Proof of Lemma~\ref{lem:g2bound}]
We rewrite
\[
g(2,\lambda)=(1/M)^{\lambda}\,(2^{\lambda}-1)\le 2^{\lambda}-1\le 2^{1/M}-1,
\]
where the first inequality follows from $(1/M)^{\lambda}\le 1$ and the second from $\lambda\le 1/M$.

Using the standard inequality $e^x-1\le x e^x$ for all $x\ge 0$ and setting
$x=\ln 2/M$, we obtain
\[
2^{1/M}-1 = e^{(\ln 2)/M}-1 \le \frac{\ln 2}{M}\,e^{(\ln 2)/M}
= \frac{\ln 2}{M}\,2^{1/M}.
\]
Since $M\ge 2$, we have $2^{1/M}\le \sqrt{2}$ and $\ln 2/M\le \ln 2/2$, which implies $2^{1/M}-1 \le \frac{\ln 2}{2}\sqrt{2} < \tfrac{1}{2}$. On the other hand, $(1/M)^{1/M}$ is minimized over integers $M\ge 2$ at $M=3$, where $(1/3)^{1/3}>\tfrac{1}{2}$. Therefore, $g(2,\lambda)\le 2^{1/M}-1 < (1/M)^{1/M}$, which completes the proof.
\end{proof}
}

\begin{proof}[Proof of Corollary~\ref{cor:nonempty}]
Fix any $\ell\in[L]$ and let $M=|\M_\ell|$. By Algorithm~\ref{alg:alphas}, we have
$\alpha_1\ge \alpha_2\ge \cdots \ge \alpha_K$. With the initialization
$\alpha_1=1/M_{\max}$ where $M_{\max}=\max\{|\M_1|,\ldots,|\M_L|\}$, it follows that
$\alpha_k\le 1/M_{\max}\le 1/M$ for all $k\in[K]$. 

Recall the marginal gain function on partition $\M_\ell$ is $g(t,\alpha_k) = (t/M)^{\alpha_k} - ((t-1)/M)^{\alpha_k}$ for $t\in\{1,2,\ldots,M\}$. Hence, for any already-allocated group with $\U[k,\ell]\ge 1$ (\ie $t\ge 1$), its next-step marginal gain satisfies $g(t+1,\alpha_k)\le g(2,\alpha_k) = (2/M)^{\alpha_k}-(1/M)^{\alpha_k}$. According to Lemma~\ref{lem:g2bound}, we have $g(2,\alpha_k) = (2/M)^{\alpha_k}-(1/M)^{\alpha_k} < (1/M)^{1/M}$ for $\alpha_k\in(0,1/M]$.

On the other hand, for any group $j$ that has received no item yet, \ie $\U[j,\ell]=0$,
its first-item marginal gain is
$g(1,\alpha_j) = (1/M)^{\alpha_j} \ge (1/M)^{1/M}$,
because $\alpha_j\le 1/M$ and $1/M\in(0,1)$. Consequently, as long as there exists a group with $\U[\cdot,\ell]=0$, every already-allocated group
has next-step marginal gain $< (1/M)^{1/M}$, while each unallocated group has first-item marginal gain
$\ge (1/M)^{1/M}$. Hence, \ours must allocate to an unallocated group at each iteration until all $K$
groups receive at least one item. Since $M> K$ by assumption, this is feasible and yields
$\U[k,\ell]\ge 1$ for all $k\in[K]$. As $\ell$ is arbitrary, the claim holds for all $\ell\in[L]$.
\end{proof}

\begin{proof}[Proof of Lemma~\ref{lem:similarmargin}]
Fix any partition $\ell\in[L]$ and let $M=|\M_\ell|$. Let $\x=(\x_1,\ldots,\x_K)$ be the optimal solution of the divisible relaxation on $\M_\ell$. By Eq.~\ref{eqn:marginalequation}, there exists a common constant $\lambda>0$ such that $f_x^\prime(\x_k,\alpha_k)=\lambda$ for all $k\in[K]$, where $f(x,\alpha)=x^\alpha$ and $f_x^\prime(x,\alpha)=\alpha x^{\alpha-1}$.
For the indivisible allocation $\U$ produced by \ours, recall the marginal gain
$g(t,\alpha_k)=\W(t,\alpha_k)-\W(t-1,\alpha_k)$, which admits the integral form $g(t,\alpha_k)=\int_{(t-1)/M}^{t/M} f_x^\prime(x,\alpha_k)\,dx$. Since $f(\cdot,\alpha_k)$ is concave for $\alpha_k\in(0,1)$, $f_x^\prime(x,\alpha_k)$ is strictly decreasing in $x$. Define $b_k=\lfloor \x_k M\rfloor$ for $k\in[K]$ and let $R=M-\sum_{k\in[K]} b_k$, so $0\le R\le K-1$. We now prove that for every group $k$, the integer allocation $\U[k, \ell]$ differs from $\lfloor \x_k M \rfloor$ by at most $\mathcal{O}(K)$ via a contradiction argument based on the greedy marginal ordering.

\paragraph{Lower bound.}
We show $\U[k,\ell]\ge \max\{b_k-K,0\}+1$ for all $k\in[K]$.
Suppose for contradiction that there exists $i\in[K]$ such that $\U[i,\ell]\le b_i-K$.
Then $b_i-\U[i,\ell]\ge K$, and hence the total surplus of the remaining $K-1$ groups relative to $(b_k)_{k\in[K]}$ satisfies
$\sum_{k\ne i}(\U[k,\ell]-b_k)=R+(b_i-\U[i,\ell])\ge R+K\ge K$.
Since there are only $K-1$ groups in $[K]\setminus\{i\}$, by pigeonhole there exists some $j\ne i$ with $\U[j,\ell]\ge b_j+2$; otherwise each $k\ne i$ would satisfy $\U[k,\ell]\le b_k+1$ and the total surplus would be at most $K-1$, contradicting $\sum_{k\ne i}(\U[k,\ell]-b_k)\ge K$.

Now compare the next unallocated marginal of group $i$ and the last allocated marginal of group $j$.
Since $\U[i,\ell]+1\le b_i-K+1\le b_i$ and $b_i/M\le \x_i$, the interval $[(\U[i,\ell])/M,(\U[i,\ell]+1)/M]$ lies strictly to the left of (or ends at) $\x_i$, and thus $f_x^\prime(x,\alpha_i)>\lambda$ for all $x$ in this interval except possibly at a single endpoint; hence $g(\U[i,\ell]+1,\alpha_i)>\lambda/M$.
Similarly, since $\U[j,\ell]\ge b_j+2$ and $\x_j<(b_j+1)/M$, the interval $[(\U[j,\ell]-1)/M,\U[j,\ell]/M]$ lies strictly to the right of $\x_j$, and thus $f_x^\prime(x,\alpha_j)<\lambda$ throughout; hence $g(\U[j,\ell],\alpha_j)<\lambda/M$.
Therefore, $g(\U[i,\ell]+1,\alpha_i)>g(\U[j,\ell],\alpha_j)$, contradicting the greedy terminal optimality condition of \ours on $\M_\ell$ according to Lemma~\ref{lem:marginal}.
This proves $\U[i,\ell]\ge b_i-K+1$, and hence $\U[k,\ell]\ge \max\{b_k-K,0\}+1$ for all $k$.

\paragraph{Upper bound.}
We show $\U[k,\ell]\le \lceil \x_k M\rceil+K-1$ for all $k\in[K]$.
Suppose for contradiction that there exists $i\in[K]$ with $\U[i,\ell]\ge \lceil \x_i M\rceil+K$.
If $\x_i M\notin\mathbb{Z}$, then $\lceil \x_i M\rceil=b_i+1$ and thus $\U[i,\ell]\ge b_i+K+1$, implying $\U[i,\ell]-b_i\ge K+1$.
If $\x_i M\in\mathbb{Z}$, then $\lceil \x_i M\rceil=b_i$ and thus $\U[i,\ell]\ge b_i+K$, implying $\U[i,\ell]-b_i\ge K$.
In both cases, we have $\U[i,\ell]-b_i\ge K$.
Since the total surplus relative to $(b_k)$ equals $R\le K-1$, there must exist some $j\ne i$ with $\U[j,\ell]\le b_j-1$; otherwise every $k\ne i$ would satisfy $\U[k,\ell]\ge b_k$ and the total surplus would be at least $K$, contradicting $R\le K-1$.

Consider the last allocated marginal of group $i$ and the next unallocated marginal of group $j$.
Because $\U[i,\ell]\ge b_i+K\ge b_i+2$ due to $(K\ge 2)$ and $\x_i<(b_i+1)/M$, the interval $[(\U[i,\ell]-1)/M,\U[i,\ell]/M]$ lies strictly to the right of $\x_i$, hence $g(\U[i,\ell],\alpha_i)<\lambda/M$.
Because $\U[j,\ell]\le b_j-1$, we have $\U[j,\ell]+1\le b_j$ and $b_j/M\le \x_j$, so the interval $[(\U[j,\ell])/M,(\U[j,\ell]+1)/M]$ lies to the left of (or ends at) $\x_j$, hence $g(\U[j,\ell]+1,\alpha_j)>\lambda/M$. Therefore, $g(\U[i,\ell],\alpha_i)<g(\U[j,\ell]+1,\alpha_j)$, again contradicting the greedy terminal optimality condition of \ours on $\M_\ell$. This proves $\U[i,\ell]\le \lceil \x_i M\rceil+K-1$, and hence $\U[k,\ell]\le \lceil \x_k M\rceil+K-1$ for all $k$.

Combining the two parts yields the stated bounds for all $k\in[K]$ on partition $\M_\ell$. As $\ell$ is arbitrary, the lemma holds for all $\ell\in[L]$.
\end{proof}

\begin{proof}[Proof of Theorem~\ref{thm:worstinequality}]

\arxiv{Fix a partition $\M_\ell$ and let $M=|\M_\ell|$. For each group $k\in[K]$, define
$b_k=\lfloor \x_k M\rfloor$ and denote $\U_k=\U[k,\ell]$.
For $\beta\in(0,1)$, let $p=1-\beta\in(0,1)$ and define the per-agent quantity
$\b_k=\U_k/(M|\N_k|)$.

Recall $f(x,\alpha)=x^\alpha$ and the discrete marginal gain
$g(t,\alpha)=f(t/M,\alpha)-f((t-1)/M,\alpha)
=\int_{(t-1)/M}^{t/M} f_x'(x,\alpha)\,dx$,
where $f_x'(x,\alpha)=\alpha x^{\alpha-1}$ is strictly decreasing in $x$ for $\alpha\in(0,1)$.
By the KKT condition of the divisible optimum $\x$
(Equation~\eqref{eqn:marginalequation} / Lemma~\ref{lem:kkt}),
there exists $\lambda>0$ such that $f_x'(\x_k,\alpha_k)=\lambda$ for all $k\in[K]$.

\smallskip
\noindent\textbf{Two exclusion relations.}

\begin{claim}\label{clm:exclusion1}
If there exists some $i\in[K]$ such that $\U_i\ge b_i+2$, then $\U_j\ge b_j$ holds for all $j\in[K]$.
\end{claim}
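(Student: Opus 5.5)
We argue by contradiction, using the same marginal-comparison device as in the proof of Lemma~\ref{lem:similarmargin}. Suppose some $i$ has $\U_i\ge b_i+2$ and, at the same time, some $j$ has $\U_j\le b_j-1$. Note that $j\neq i$ automatically, since $\U_i\ge b_i+2>b_i-1$. We will compare the last allocated marginal gain of group $i$ with the next unallocated marginal gain of group $j$, and show that this contradicts the terminal greedy condition of Lemma~\ref{lem:marginal}, namely $g(\U_i,\alpha_i)\ge g(\U_j+1,\alpha_j)$. That lemma applies because $\U_i\ge 2\ge 1$.

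\textbf{Step 1 (group $i$ is over-allocated).} Since $b_i=\lfloor \x_i M\rfloor$, we have $\x_i<(b_i+1)/M\le(\U_i-1)/M$. So the interval $[(\U_i-1)/M,\U_i/M]$ lies strictly to the right of $\x_i$. The derivative $f_x'(\cdot,\alpha_i)$ is strictly decreasing and equals $\lambda$ at $\x_i$, so it is $<\lambda$ on this whole interval. Integrating the representation $g(t,\alpha)=\int_{(t-1)/M}^{t/M}f_x'(x,\alpha)\,dx$ then gives $g(\U_i,\alpha_i)<\lambda/M$.

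\textbf{Step 2 (group $j$ is under-allocated).} From $\U_j+1\le b_j$ we get $(\U_j+1)/M\le b_j/M\le \x_j$. So the interval $[\U_j/M,(\U_j+1)/M]$ lies to the left of $\x_j$, touching it at most at the right endpoint. Hence $f_x'(x,\alpha_j)>\lambda$ for all $x$ in the interval except possibly one point, and the integral gives $g(\U_j+1,\alpha_j)>\lambda/M$.

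Combining the two steps yields $g(\U_i,\alpha_i)<\lambda/M<g(\U_j+1,\alpha_j)$. This contradicts Lemma~\ref{lem:marginal}, so no such $j$ exists, and therefore $\U_j\ge b_j$ for all $j\in[K]$.

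\textbf{Main obstacle.} The delicate point is strictness at the boundary in Step 2, when $\x_j M\in\mathbb{Z}$ and the interval ends exactly at $\x_j$. This is handled by strict monotonicity of $f_x'$: the integrand exceeds $\lambda$ on the interior, so the integral is strictly larger than $\lambda/M$. Strictness matters because the paper resolves ties uniformly at random, so only a strict reversal of the inequality in Lemma~\ref{lem:marginal} yields a contradiction. A secondary point is the case $\U_j=0$, where $f_x'$ blows up at $x=0$. The integral is still finite and equals $(1/M)^{\alpha_j}$, so Step 2 goes through unchanged.
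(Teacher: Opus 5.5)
Your proof is correct and is essentially the paper's argument. You assume $\U_i\ge b_i+2$ and $\U_j\le b_j-1$, use the KKT constant $\lambda$ and the integral form of $g$ to get $g(\U_i,\alpha_i)<\lambda/M<g(\U_j+1,\alpha_j)$, and contradict the terminal greedy condition of Lemma~\ref{lem:marginal}. Your extra care with the endpoint case $\x_j M\in\mathbb{Z}$ (where the paper loosely says $f_x'>\lambda$ ``throughout'' the interval) and with $\U_j=0$ tightens the argument without changing the route.
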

\begin{proof}
Suppose for contradiction that there exist $i,j$ such that $\U_i\ge b_i+2$ and $\U_j\le b_j-1$.
Since $\U_j+1\le b_j$ and $b_j/M\le \x_j$, the interval
$[\U_j/M,(\U_j+1)/M]$ lies to the left of (or ends at) $\x_j$.
Thus $f_x'(x,\alpha_j)>\lambda$ throughout this interval, and hence
$g(\U_j+1,\alpha_j)>\lambda/M$.

On the other hand, $\U_i\ge b_i+2$ implies $\U_i-1\ge b_i+1$, and since $\x_i<(b_i+1)/M$,
the interval $[(\U_i-1)/M,\U_i/M]$ lies strictly to the right of $\x_i$.
Thus $f_x'(x,\alpha_i)<\lambda$ on this interval and hence
$g(\U_i,\alpha_i)<\lambda/M$. Therefore $g(\U_j+1,\alpha_j)>g(\U_i,\alpha_i)$,
contradicting the terminal optimality condition of \ours (Lem~\ref{lem:marginal}, condition T1).
\end{proof}

\begin{claim}\label{clm:exclusion2}
If there exists some $i\in[K]$ such that $\U_i\le b_i-1$, then $\U_j\le b_j+1$ holds for all $j\in[K]$.
\end{claim}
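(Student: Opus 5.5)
The proof of Claim~\ref{clm:exclusion2} mirrors that of Claim~\ref{clm:exclusion1}: we argue by contradiction and compare marginal gains against the common KKT level $\lambda/M$. Suppose there exist $i,j\in[K]$ with $\U_i\le b_i-1$ and $\U_j\ge b_j+2$. Necessarily $i\neq j$, since a single index cannot satisfy both inequalities. The aim is to show that the next unallocated marginal gain of group $i$ strictly exceeds the last allocated marginal gain of group $j$. This would contradict the terminal greedy condition of Lemma~\ref{lem:marginal}, namely $g(\U_j,\alpha_j)\ge g(\U_i+1,\alpha_i)$, which applies because $\U_j\ge 1$ (by Corollary~\ref{cor:nonempty}, or directly from $\U_j\ge b_j+2\ge 2$).

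\textbf{Step 1: bound the gain of group $i$ from below.} From $\U_i\le b_i-1$ we get $\U_i+1\le b_i$, and $b_i/M\le \x_i$ by definition of the floor. So the interval $[\U_i/M,(\U_i+1)/M]$ lies to the left of $\x_i$, touching it at most at the right endpoint. Since $f_x'(\cdot,\alpha_i)$ is strictly decreasing and $f_x'(\x_i,\alpha_i)=\lambda$, the integrand exceeds $\lambda$ everywhere on the interval except possibly at one endpoint. The integral representation then gives $g(\U_i+1,\alpha_i)>\lambda/M$. One technical point needs care when $\U_i=0$: the derivative blows up at $0$, but the integral is still finite, equal to $(1/M)^{\alpha_i}$, and the strict inequality survives.

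\textbf{Step 2: bound the gain of group $j$ from above.} From $\U_j\ge b_j+2$ we get $(\U_j-1)/M\ge (b_j+1)/M>\x_j$. So the interval $[(\U_j-1)/M,\U_j/M]$ lies strictly to the right of $\x_j$. There $f_x'(x,\alpha_j)<\lambda$, hence $g(\U_j,\alpha_j)<\lambda/M$.

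Combining the two steps gives $g(\U_i+1,\alpha_i)>\lambda/M>g(\U_j,\alpha_j)$, which is the desired contradiction. Therefore $\U_j\le b_j+1$ for all $j$.

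The only real subtlety is the boundary case where the left interval ends exactly at $\x_i$, or starts at $0$. Strictness still holds because the integrand differs from $\lambda$ at only a measure-zero set of points. The argument is otherwise symmetric to Claim~\ref{clm:exclusion1}, and I expect no serious obstacle.
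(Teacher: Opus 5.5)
Your proof is correct, and it is essentially the paper's argument. The marginal-gain comparison against $\lambda/M$, with strictness at the boundary intervals handled as you do, is exactly what the paper uses to prove Claim~\ref{clm:exclusion1}; the paper's proof of this claim only notes that both claims forbid the same pair ($\U_i\le b_i-1$ together with $\U_j\ge b_j+2$), cites Claim~\ref{clm:exclusion1}, and uses integrality to go from $\U_j<b_j+2$ to $\U_j\le b_j+1$.
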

\begin{proof}
Assume $\U_i\le b_i-1$ for some $i$.
If there existed $j$ with $\U_j\ge b_j+2$, then Claim~\ref{clm:exclusion1} would imply $\U_i\ge b_i$,
a contradiction.
Hence $\U_j<b_j+2$ for all $j$, and since $\U_j,b_j$ are integers, $\U_j\le b_j+1$.
\end{proof}

\smallskip
\noindent\textbf{Normalization and surplus.}
Let $\mu=\frac{1}{K}\sum_{k=1}^K \b_k$ and $\y=\b/\mu$.
By scale invariance, $A(\b,\beta)=A(\y,\beta)$, and
$A(\y,\beta)=1-\big(\frac{1}{K}\sum_{k=1}^K \y_k^p\big)^{1/p}$.
Define the surplus amount
$R:=M-\sum_{k=1}^K b_k=M-\sum_{k=1}^K\lfloor \x_k M\rfloor\in[0,K-1]$.

\smallskip
\noindent\emph{Case A: there exists $i$ such that $\U_i\ge b_i+2$.}
By Claim~\ref{clm:exclusion1}, $\U_k\ge b_k$ for all $k$.
Let $s_k=\U_k-b_k\in\Z_{\ge0}$; then $\sum_k s_k=R$.

\smallskip
\noindent\textbf{Exchange argument and extremal allocation.}
Assume $|\N_1|\le \cdots \le |\N_K|$.
For any feasible surplus vector $s$ with $\sum_k s_k=R$,
if there exist $a<b$ with $s_b\ge1$, define $s'$ by
$s'_a=s_a+1$, $s'_b=s_b-1$, and $s'_k=s_k$ otherwise.
Let $\U'$ be the corresponding allocation and define $\b',\mu',\y'$ analogously.

\begin{lemma}\label{lem:exchange_majorization}
Let $\y\in\R_+^K$ satisfy $\frac{1}{K}\sum_{k=1}^K \y_k=1$ and let $p\in(0,1)$.
Suppose $\y'$ is obtained from $\y$ by increasing one coordinate by $\Delta_a$,
decreasing another by $\Delta_b$, with $\Delta_a\ge\Delta_b\ge0$,
and then rescaling all coordinates by a common factor so that the mean remains $1$. Consequently we have
$\sum_{k=1}^K (\y'_k)^p \le \sum_{k=1}^K \y_k^p$.
\end{lemma}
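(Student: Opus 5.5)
The plan is to split the transformation into an un-normalized step and a rescaling step, and to control each with the concavity of $x\mapsto x^p$ for $p\in(0,1)$.

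\textbf{Step 1 (un-normalized move).} Let $\z$ be the vector after the move and before rescaling: $\z_a=\y_a+\Delta_a$, $\z_b=\y_b-\Delta_b$, and $\z_k=\y_k$ otherwise. Write $S=\sum_k\y_k^p$. The tangent-line inequalities for the concave map $x\mapsto x^p$ give
\[
\textstyle\sum_k \z_k^p \;\le\; S + p\big(\y_a^{p-1}\Delta_a-\y_b^{p-1}\Delta_b\big).
\]

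\textbf{Step 2 (rescaling).} The mean of $\z$ is $1+\delta/K$ with $\delta=\Delta_a-\Delta_b\ge 0$. Hence $\y'=c\,\z$ with $c=K/(K+\delta)\le 1$, and $\sum_k(\y'_k)^p=c^p\sum_k\z_k^p$. Since $c^p=(1+\delta/K)^{-p}$, Bernoulli's inequality gives roughly $c^p\le 1-p\delta/(K+\delta)$. By the power-mean inequality, $S\le K$ because the mean of $\y$ is $1$.

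\textbf{Step 3 (combine).} The goal reduces to
\[
\textstyle c^p\Big(S+p\big(\y_a^{p-1}\Delta_a-\y_b^{p-1}\Delta_b\big)\Big)\le S.
\]
To first order, this amounts to showing
\[
\textstyle \y_a^{p-1}\Delta_a-\y_b^{p-1}\Delta_b\;\le\;\frac{S}{K}\,(\Delta_a-\Delta_b).
\]

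\textbf{Main obstacle.} The display in Step 3 is the step I expect to be hardest, and it is not implied by the hypotheses as literally stated. If $\y_a$ is small and $\y_b$ is large, the move is a Pigou--Dalton-type transfer toward the poorer coordinate, and it \emph{increases} $\sum_k\y_k^p$. So I would first make explicit the regime used in Case A of Theorem~\ref{thm:worstinequality}. There, $a<b$ indexes the smaller group, so one item adds more per-agent utility to $a$ than it removes from $b$: $\Delta_a=1/(M|\N_a|)\ge\Delta_b=1/(M|\N_b|)$.

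Under the additional ordering $\y_a\ge\y_b$ the argument would go through. In that case the move transfers mass from a poorer coordinate to a richer one, and I would argue by majorization: a regressive transfer followed by a uniform rescaling cannot increase the Schur-concave sum $\sum_k\y_k^p$. Concretely, I would treat the move as the composition of two pieces. The first is a pure transfer of size $\Delta_b$ from $b$ to $a$, which does not increase $\sum_k\y_k^p$ by Karamata's inequality since $\y_a\ge\y_b$. The second is an extra increment $\delta$ at coordinate $a$ followed by rescaling. This second piece is handled by Steps 1 and 2: the required inequality is $\y_a^{p-1}\le S/K$, which holds whenever $\y_a\ge 1$, because $S/K\ge \min_k \y_k^{p}$ and the power mean is at most $1$.

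If the ordering $\y_a\ge\y_b$ cannot be guaranteed in Case A, I would instead verify Step 3 directly using the bounds of Lemma~\ref{lem:similarmargin}. Those bounds keep all $\y_k$ within $1\pm O(K/M)$, so $\y_k^{p-1}\approx 1\approx S/K$. The required inequality would then hold up to lower-order terms, and I would need to show that the $O(K/M)$ error terms are dominated by $\delta$.
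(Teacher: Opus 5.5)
You are right that the lemma cannot be proved as stated, and this is a defect of the paper's own argument as well. Already for $K=2$, $p=\tfrac12$, $\y=(\tfrac12,\tfrac32)$, raise coordinate $1$ by $\Delta_a=\tfrac12$ and lower coordinate $2$ by $\Delta_b=\tfrac12$. This gives $\y'=(1,1)$ with no rescaling, and $\sum_k(\y'_k)^{1/2}=2>\sqrt{1/2}+\sqrt{3/2}\approx1.932$. The paper's proof asserts that every top-$m$ partial sum of $\y'$ dominates that of $\y$, and then applies Karamata. That assertion is exactly what fails when the raised coordinate is not the largest; here the top-$1$ sum drops from $\tfrac32$ to $1$. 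Your Pigou--Dalton diagnosis is therefore correct and catches something the paper misses.

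The gap is in your repair. First, the extra hypothesis $\y_a\ge\y_b$ is not sufficient. Your step ``$\y_a^{p-1}\le S/K$ whenever $\y_a\ge1$'' is a non sequitur: both sides are at most $1$, but nothing orders them. At $\y_a=1$ the left side equals $1$, while $S/K<1$ unless $\y$ is constant. Concretely, take $K=3$, $p=\tfrac12$, $\y=(1,\,0.01,\,1.99)$, $a=1$, $b=2$, $\Delta_a=0.3$, $\Delta_b=0.001$. Then $\sum_k(\y'_k)^{1/2}=\sqrt{3/3.299}\,\bigl(\sqrt{1.3}+\sqrt{0.009}+\sqrt{1.99}\bigr)\approx2.523>2.511\approx\sum_k\y_k^{1/2}$, even though $\y_a\ge\y_b$.

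The exact criterion for your second piece is as follows. Let $\z=\mathbf{w}+t\,\mathbf{e}_a$, where $\mathbf{w}$ has mean $1$. The $t$-derivative of $\sum_k\z_k^p/(\frac1K\sum_k\z_k)^p$ has the sign of $\z_a^{p-1}-\sum_k\z_k^p/\sum_k\z_k$. Moreover, $\sum_k\z_k^p=\sum_k\z_k\,\z_k^{p-1}\ge\z_a^{p-1}\sum_k\z_k$ whenever $\z_a=\max_k\z_k$. So the right hypothesis is that $a$ is a largest coordinate of $\y$. Under it your decomposition does go through. The transfer of $\Delta_b$ from $b$ to $a$ is regressive, so Karamata applies. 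Coordinate $a$ then stays maximal while $t$ runs from $0$ to $\delta$, so the derivative is nonpositive throughout.

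Second, your tangent-line-plus-Bernoulli estimate would not close even under that hypothesis. When $\y$ is constant, the first-order term vanishes and the inequality is decided at second order, which both of your bounds discard. The derivative-along-the-path argument above is exact and avoids this.

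Third, your fallback through Lemma~\ref{lem:similarmargin} cannot work. When $|\N_a|=|\N_b|$ you have $\delta=0$, and Step~3 reduces exactly to $\y_a\ge\y_b$; closeness of all $\y_k$ to $1$ does not decide that. Whether Case~A of Theorem~\ref{thm:worstinequality} may use the lemma therefore hinges on $\y_a$ being maximal there, which neither you nor the paper establishes.
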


\begin{proof}[Proof of Lemma~\ref{lem:exchange_majorization}]
Let $\z$ denote the unnormalized vector after the two-coordinate update, and let
$\rho=\frac{1}{K}\sum_k \z_k\ge1$ be the rescaling factor.
All coordinates except the increased one are scaled down by $\rho$,
while one coordinate gains additional mass before rescaling.
Thus, after sorting in nonincreasing order, for every $m\in[K]$
the sum of the $m$ largest coordinates of $\y'=\z/\rho$ is at least that of $\y$,
with equality at $m=K$.
Since $t\mapsto t^p$ is concave for $p\in(0,1)$,
Karamata's inequality yields the desired inequality.
\end{proof}

In our setting, the update satisfies the conditions of Lemma~\ref{lem:exchange_majorization}
with $\Delta_a=\frac{1}{M|\N_a|\mu}$ and $\Delta_b=\frac{1}{M|\N_b|\mu}$,
and $\Delta_a\ge\Delta_b$ because $|\N_a|\le|\N_b|$.
Therefore $A(\y',\beta)\ge A(\y,\beta)$.

Iterating the exchange moves all surplus to group $1$ and yields the extremal allocation
$\tilde{\U}_1=b_1+R$ and $\tilde{\U}_k=b_k$ for $k\ge2$.
Hence for any $\U$ in Case A, $A(\b,\beta)\le A(\tilde{\b},\beta)$, where
$\tilde{\b}_1=\frac{b_1+R}{M|\N_1|}$ and $\tilde{\b}_k=\frac{b_k}{M|\N_k|}$.

\smallskip
\noindent\emph{Case B: for all $i$, $\U_i\le b_i+1$.}
Then $s_k=\U_k-b_k\in\{0,1\}$ and $\sum_k s_k=R$.
If there exist $a<b$ with $s_a=0$ and $s_b=1$, the same exchange argument applies and does not decrease
the Atkinson inequality.
Iterating again yields the same extremal allocation $\tilde{\U}$.
Thus $A(\b,\beta)\le A(\tilde{\b},\beta)$ also holds in Case B.

Combining the two cases, we conclude that for any allocation $\U$ returned by \ours on $\M_\ell$, its Atkinson inequality is maximized when all $R$ surplus items are assigned to the group with the smallest population size, namely $\N_1$. In this extremal allocation, $\U_1=\lfloor \x_1 M\rfloor+R$ and $\U_k=\lfloor \x_k M\rfloor$ for all $k\ge2$. Substituting this allocation into the definition of the Atkinson inequality yields
\[\textstyle A(\b,\beta)\le1-\frac{1}{\mu}\left(\frac{1}{K}\left(\frac{\lfloor \x_1 M\rfloor+R}{M|\N_1|}\right)^{1-\beta}+\frac{1}{K}\sum_{k=2}^K\left(\frac{\lfloor \x_k M\rfloor}{M|\N_k|}\right)^{1-\beta}\right)^{\frac{1}{1-\beta}},\]
where $\mu=\frac{1}{K}\left(\frac{\lfloor \x_1 M\rfloor+R}{M|\N_1|} +\sum_{k=2}^K\frac{\lfloor \x_k M\rfloor}{M|\N_k|}\right)$, which completes the proof. 
}\sketch{The proof sketch consists of two main steps. First, let $b_k=\lfloor x_k M\rfloor$ and denote the surplus amount by
$R=M-\sum_{k\in[K]} b_k$.
Using the KKT condition of the divisible optimum $\x$, there exists $\lambda>0$ such that
$f_x'(\x_k,\alpha_k)=\lambda$ for all $k\in[K]$.
Since the marginal gain function
$g(t,\alpha)=f(t/M,\alpha)-f((t-1)/M,\alpha)$
is strictly decreasing in $t$, we obtain two exclusion relations:
i) if some group satisfies $\U_i\ge b_i+2$, then every other group must satisfy $\U_j\ge b_j$;
ii) if some group satisfies $\U_i\le b_i-1$, then every other group must satisfy $\U_j\le b_j+1$.
Consequently, the integer allocation returned by \ours differs from the divisible optimum only through the redistribution of the surplus amount $R$.

Second, define the normalized per-agent allocation
$\b_k=\U_k/(M|\N_k|)$ and let
$A(\b,\beta)=1-(\frac{1}{K}\sum_k \y_k^{1-\beta})^{1/(1-\beta)}$
with $\y=\b/\mu$.
Since $t^{1-\beta}$ is concave for $\beta\in(0,1)$, an exchange argument together with Karamata's inequality shows that moving one surplus item from a larger group to a smaller group cannot decrease the Atkinson inequality.
Therefore, repeatedly applying such exchanges yields the extremal allocation in which all surplus items are concentrated in the smallest group:
$\tilde{\U}_1=b_1+R$ and $\tilde{\U}_k=b_k$ for $k\ge2$.
Substituting this extremal allocation into the definition of the Atkinson inequality gives the stated upper bound.
}
\end{proof}

\begin{proof}[Proof of Theorem~\ref{thm:oursa}]
Fix any partition $\ell\in[L]$ and denote $M=|\M_\ell|$. Since the welfare is additive over partitions and each constraint $\C_\ell$ is group-wise separable, it suffices to show that for each fixed $\ell$, the vector $\U[\cdot,\ell]$ returned by \oursa maximizes the partition welfare $\sum_{k\in[K]}\W(\U[k,\ell],\alpha_k)$ subject to $\sum_{k\in[K]}\U[k,\ell]\le M$ and
$0\le \U[k,\ell]\le c_{k,\ell}$.

We first state a greedy optimality condition induced by Algorithm~\ref{alg:RAC} on partition $\M_\ell$. At any successful assignment step, Algorithm~\ref{alg:RAC} selects a feasible group $j$ such that $g(\U[j,\ell]+1,\alpha_j)\ge g(\U[i,\ell]+1,\alpha_i)$ holds for every group $i$ with $\U[i,\ell]+1\le c_{i,\ell}$. Moreover, if Algorithm~\ref{alg:RAC} does not assign all $M$ items on partition $\M_\ell$, then it must terminate because no group remains feasible, i.e., $\U[i,\ell]=c_{i,\ell}$ holds for all $i\in[K]$.

Let $\U^\circ[\cdot,\ell]$ be a feasible allocation on $\M_\ell$.
Assume for contradiction that
$\sum_{k\in[K]} \W(\U[k,\ell],\alpha_k) < \sum_{k\in[K]} \W(\U^\circ[k,\ell],\alpha_k)$.
We consider two cases.

\spara{Case one: $\sum_{k\in[K]} \U[k,\ell] < M$} In this case, Algorithm~\ref{alg:RAC} must have terminated early. As noted above, this implies $\U[i,\ell]=c_{i,\ell}$ for all $i\in[K]$.
Since $\U^\circ$ is feasible under the same box constraints, it follows that
$\U^\circ[i,\ell]\le c_{i,\ell}=\U[i,\ell]$ for all $i\in[K]$.
Therefore $\U^\circ$ cannot assign more items or achieve strictly larger welfare than $\U$,
contradicting the assumption.

\spara{Case two: $\sum_{k\in[K]} \U[k,\ell] = M$}
If $\U^\circ[\cdot,\ell]=\U[\cdot,\ell]$, we are done. Otherwise, since both allocations assign
exactly $M$ items, there exist indices $i,j\in[K]$ such that
$\U^\circ[i,\ell]\ge \U[i,\ell]+1$ and $\U^\circ[j,\ell]\le \U[j,\ell]-1$.
Then $\U[i,\ell]+1\le \U^\circ[i,\ell]\le c_{i,\ell}$, so increasing group $i$ by one is feasible,
and $\U[j,\ell]\ge 1$.

Consider the feasible exchange that moves one item from $j$ to $i$, producing $\U'[\cdot,\ell]$ with
$\U'[i,\ell]=\U[i,\ell]+1$, $\U'[j,\ell]=\U[j,\ell]-1$, and $\U'[k,\ell]=\U[k,\ell]$ for
$k\notin\{i,j\}$.
The welfare change equals
$g(\U[i,\ell]+1,\alpha_i)-g(\U[j,\ell],\alpha_j)$.

Let $t$ be the iteration at which Algorithm~\ref{alg:RAC} assigns the last item to group $j$ on
partition $\M_\ell$, and let $\U^{(t-1)}$ be the allocation state immediately before that assignment.
Then $\U^{(t-1)}[j,\ell]+1=\U[j,\ell]$ and $\U^{(t-1)}[i,\ell]\le \U[i,\ell]$.
By Proposition~\ref{pro:monotone},
$g(\U^{(t-1)}[i,\ell]+1,\alpha_i)\ge g(\U[i,\ell]+1,\alpha_i)$ and
$g(\U^{(t-1)}[j,\ell]+1,\alpha_j)=g(\U[j,\ell],\alpha_j)$.
Since group $i$ is feasible at state $\U^{(t-1)}$, the greedy choice at iteration $t$ implies $g(\U[j,\ell],\alpha_j) = g(\U^{(t-1)}[j,\ell]+1,\alpha_j) \ge g(\U^{(t-1)}[i,\ell]+1,\alpha_i) \ge g(\U[i,\ell]+1,\alpha_i)$. Hence the exchange does not increase welfare. Repeating such exchanges finitely many times transforms $\U[\cdot,\ell]$ into $\U^\circ[\cdot,\ell]$ while never increasing welfare, contradicting the assumption that $\sum_k \W(\U[k,\ell],\alpha_k) < \sum_k \W(\U^\circ[k,\ell],\alpha_k)$.

In both cases we obtain a contradiction. Hence $\U[\cdot,\ell]$ is welfare-optimal on $\M_\ell$. As $\ell$ is arbitrary and welfare is additive across partitions, the allocation returned by \oursa maximizes the total welfare under the constraints $\C$.
\end{proof}

\end{sloppy}
\end{document}